\documentclass[final]{colt2026}

\usepackage{xcolor}
\definecolor{color1}{HTML}{105e8a}
\definecolor{color2}{HTML}{e99926}
\definecolor{color3}{HTML}{b82a0c}
\definecolor{color4}{HTML}{3e8a10}
\definecolor{color5}{HTML}{80037e}
\definecolor{color6}{HTML}{070707}

\title[Actively Learning Halfspaces]{Actively Learning Halfspaces without Synthetic Data}
\usepackage{times}

\usepackage{enumitem}
\usepackage[nameinlink]{cleveref}

\newcommand{\ds}{f}
\newcommand{\DS}{\mathrm{DS}}

\DeclareMathOperator*{\argmax}{arg\,max}
\DeclareMathOperator*{\argmin}{arg\,min}

\newtheorem{claim}[theorem]{Claim}
\newtheorem{observation}[theorem]{Observation}
\newtheorem{problem}[theorem]{Problem}

\newcommand{\ignore}[1]{}


\newcommand{\cA}{{\cal A}}

\newcommand{\cC}{{\cal C}}

\newcommand{\cE}{{\cal E}}

\newcommand{\cH}{{\cal H}}
\newcommand{\cI}{{\cal I}}

\newcommand{\cM}{{\cal M}}

\newcommand{\eps}{\varepsilon}


\newcommand{\NN}{\mathbb{N}}
\newcommand{\RR}{\mathbb{R}}

\newcommand{\norm}[1]{\left\lVert #1 \right\rVert}

\newcommand{\Sec}[1]{\hyperref[sec:#1]{\Cref*{sec:#1}}} 
\newcommand{\Eqn}[1]{\hyperref[eq:#1]{(\ref*{eq:#1})}} 
\newcommand{\Fig}[1]{\hyperref[fig:#1]{Fig.\,\ref*{fig:#1}}} 
\newcommand{\Tab}[1]{\hyperref[tab:#1]{Tab.\,\ref*{tab:#1}}} 
\newcommand{\Thm}[1]{\hyperref[thm:#1]{Theorem\,\ref*{thm:#1}}} 
\newcommand{\Fact}[1]{\hyperref[fact:#1]{Fact\,\ref*{fact:#1}}} 
\newcommand{\Lemma}[1]{\hyperref[#1]{Lemma\,\ref*{#1}}} 
\newcommand{\Prop}[1]{\hyperref[prop:#1]{Prop.~\ref*{prop:#1}}} 
\newcommand{\Cor}[1]{\hyperref[cor:#1]{Corollary~\ref*{cor:#1}}} 
\newcommand{\Conj}[1]{\hyperref[conj:#1]{Conjecture~\ref*{conj:#1}}} 
\newcommand{\Def}[1]{\hyperref[def:#1]{Definition~\ref*{def:#1}}} 
\newcommand{\Alg}[1]{\hyperref[alg:#1]{Alg.~\ref*{alg:#1}}} 
\newcommand{\Obs}[1]{\hyperref[#1]{Observation~\ref*{#1}}} 
\newcommand{\Ex}[1]{\hyperref[ex:#1]{Ex.~\ref*{ex:#1}}} 
\newcommand{\Clm}[1]{\hyperref[#1]{Claim~\ref*{#1}}} 
\newcommand{\Step}[1]{\hyperref[step:#1]{Step~\ref*{step:#1}}} 
\newcommand{\Problem}[1]{\hyperref[#1]{Problem~\ref*{#1}}} 
\newcommand{\Line}[1]{\hyperref[#1]{Line~\ref*{#1}}} 

\newenvironment{proofof}[1]{\noindent \textbf{Proof of #1.}}{}

\hypersetup{
    colorlinks,
    linkcolor={color1},
    citecolor={color1},
    urlcolor={color1}
}

\usepackage{algorithm}
\usepackage{algpseudocode}

\coltauthor{%
 \Name{Hadley Black} \Email{hadley.black@baruch.cuny.edu}\\
 \addr Baruch College, CUNY
 \AND
 \Name{Kasper Green Larsen} \Email{larsen@cs.au.dk}\\
 \addr Aarhus University
 \AND
 \Name{Arya Mazumdar} \Email{arya@ucsd.edu}\\
 \addr University of California, San Diego
 \AND
 \Name{Barna Saha} \Email{bsaha@ucsd.edu}\\
 \addr University of California, San Diego
 \AND
 \Name{Geelon So} \Email{geelon@ucsd.edu}\\
 \addr University of California, San Diego
}

\begin{document}
\maketitle

\begin{abstract}
    In the classic \emph{point location} problem, one is given an arbitrary dataset $X \subset \RR^d$ of $n$ points with \emph{query access} to an unknown halfspace $f \colon \RR^d \to \{0,1\}$, and the goal is to learn the label of every point in $X$. This problem is extremely well-studied and a nearly-optimal $\smash{\widetilde{O}(d \log n)}$ query algorithm is known due to Hopkins-Kane-Lovett-Mahajan (FOCS 2020). However, their algorithm is granted the power to query arbitrary points outside of $X$ (point synthesis), and in fact without this power there is an $\Omega(n)$ query lower bound due to Dasgupta (NeurIPS 2004). Nonetheless, query access to arbitrary synthesized data points is unrealistic in many contexts.

Our objective in this work is to design efficient algorithms for learning halfspaces \emph{without point synthesis}. To circumvent the $\Omega(n)$ lower bound, we consider learning halfspaces whose normal vectors come from a known set of size $D$, and show tight bounds of $\Theta(D + \log n)$. As a corollary, we obtain an optimal $O(d + \log n)$ query deterministic learner for the fundamental class of \emph{decision stumps} (depth-one decision trees, or axis-aligned halfspaces), closing a previous gap of $O(d \log n)$ vs. $\Omega(d + \log n)$ left open in the active learning literature. In fact, our algorithm solves the more general problem of learning a Boolean function $f$ over $n$ elements which is monotone under at least one of $D$ provided \emph{orderings} of these elements. Our technical insight is to exploit the structure in these orderings to essentially perform a binary search in parallel rather than considering each ordering sequentially, and we believe our approach may be of broader interest.

Furthermore, we use our exact learning algorithm to obtain nearly optimal algorithms for PAC-learning. We show that $O(\min(D + \log(1/\varepsilon), 1/\varepsilon) \cdot \log D)$ queries suffice to learn $f$ within error $\eps$, even in a setting when $f$ can be adversarially corrupted on a $c\varepsilon$-fraction of points, for a sufficiently small constant $c$. This bound is optimal up to a $\log D$ factor, including in the realizable setting.
\end{abstract}

\section{Introduction}

Halfspaces, or linear threshold functions (LTFs), are one of the most fundamental and well-studied objects in computer science and geometry, with numerous applications in machine learning, optimization, and beyond. The halfspace with normal vector $u \in S^{d-1}$ and threshold $t \in \RR$ bisects $\RR^d$ into two regions, classifying $x$ using the rule $\mathbf{1}(\langle u,x \rangle > t)$. In this work, we consider the following halfspace learning question. Given a dataset $X \subset \RR^d$ partitioned into two classes by an unknown halfspace $f \colon \RR^d \to \{0,1\}$, we are allowed to adaptively query the label $f(x)$ for any $x \in X$ and our goal is to learn the label of every point (or an all but $\eps$-fraction of points) in $X$. If an algorithm is allowed to query points outside of $X$, then we say it uses \emph{point synthesis}.  

With point synthesis, there is an information-theoretic lower bound of $\Omega(d \log n)$ queries and there has been extensive work on obtaining upper bounds \cite{DBLP:journals/jacm/Heide84,DBLP:journals/iandc/Meiser93,DBLP:conf/esa/CardinalIO16,DBLP:conf/icalp/KaneLM18,DBLP:journals/dcg/EzraS19} culminating\footnote{Many works refer to the dual problem of \emph{point location} where the goal is to determine the cell containing a hidden point in an arrangement of $n$ hyperplanes with the ability to query on which side of any given hyperplane the point lies. This is known to be equivalent to the halfspace learning question and so they are often referred to interchangeably.} in a $O(d \log^2 d \log n)$ query algorithm due to \cite{DBLP:conf/focs/HopkinsKLM20}. However, the ability to synthesize and query arbitrary data points is a strong assumption which may be infeasible in many settings, and our primary motivation in this work is to investigate when and how this can be avoided. The bad news is that, in general, without point synthesis no non-trivial algorithm is possible due to the following simple $\Omega(n)$ lower bound observed by \cite{DBLP:conf/nips/Dasgupta04} which holds even in two dimensions: for a set $X$ of $n$ distinct points on the unit circle, it is straightforward to design a family of halfspaces $\{f_x \colon x \in X\}$ such that $f_x$ uniquely assigns $f_x(x) = 1$ among all points in $X$, and thus one must query every point in order to learn every label\footnote{Note that this argument holds for \emph{any choice} of $n$ distinct points on the unit circle. For instance, this demonstrates an $\Omega(n)$ lower bound even for the class of halfspaces whose normal vector comes from an arbitrarily narrow cone.}.

\paragraph{Learning halfspaces optimally in bounded directions.} The above lower bound comes from the fact that there are an unbounded number of \emph{directions} the halfspace can be in, and each direction can be used to separate a single point in $X$ from the rest. Various ways of circumventing this lower bound have been considered in the literature (see \Cref{sec:related-work}); however, to do so without weakening the assumption of a worst-case point set or adding point synthesis, it is \emph{necessary} to bound the number of directions parameterizing the halfspace. It is not too hard to see that if $D \leq n$ directions are allowed, then the above argument gives an $\Omega(D)$ lower bound and $\Omega(\log n)$ queries are needed (to determine the threshold) even in one dimension, leading to an $\Omega(D + \log n)$ lower bound. On the other hand it is not too challenging to design an algorithm using $O(D \log n)$ queries by projecting $X$ along each allowed direction, and performing a binary search independently using the ordering on $X$ induced by each projection. This algorithm simply exploits the fact that the labels are monotone along the correct direction. In fact, even for the simple special case of \emph{axis-aligned} halfspaces (depth-one decision trees) in $\RR^d$, there is a gap of $O(d \log n)$ vs. $\Omega(d + \log n)$ that exists in the active learning literature \cite{DBLP:journals/jmlr/HannekeY15}. We investigate the core algorithmic question underlying this gap: \emph{can we exploit structure beyond monotonicity of the labels to go beyond a naive, independent application of binary search?} We give an affirmative answer, designing a new algorithm that achieves the optimal $O(D + \log n)$ query complexity, thus closing this gap.


\paragraph{A new ``parallel'' binary search learning algorithm.} In fact, we design an optimal algorithm for the following generalization of the above problem, which we believe may be of general interest: given a set $X$ of size $n$, query access to a function $f \colon X \to  \{0,1\}$, and $D$ permutations $\sigma_1,\ldots,\sigma_D \colon X \to [n]$ such that $f \circ \sigma_{i^{\star}}^{-1} \colon [n] \to \{0,1\}$ is \emph{monotone} for some $i^{\star} \in [D]$, we wish to learn $f(X)$ using few queries\footnote{The halfspace learning problem we consider 
corresponds to the special case of $X \subset \RR^d$ being a set of $n$ points, $f$ being a halfspace whose normal vector is from a known set $V = \{u_1,\ldots,u_D\}$ of unit vectors, and $\sigma_i$ being the ordering on $X$ by sorting according to the length of the projections $\langle x,u_i \rangle$ for each $x \in X$.}. A naive approach is to run an independent binary search on each $f \circ \sigma_i^{-1}$, ignoring the shared information across the $f \circ \sigma_i^{-1}$-s provided by a single query. (See \Cref{obs:simple} for a formal argument.) Exploiting this shared information requires carefully considering the structure of the permutations in order to choose queries to effectively perform these binary searches in parallel. We are able to use $O(1)$ queries to either reduce $D$ by one, or reduce $n$ by a constant factor, performing a step of binary search simultaneously on every $f \circ \sigma_i^{-1}$. At a high level, our algorithm and its analysis demonstrate how one can carefully choose queries to obtain significant speedups for analyzing multiple sequences simultaneously. We believe that our ideas for speeding up parallel binary search, and even our algorithm as a black-box, may be useful for obtaining similar speedups over the naive algorithm for other problems. 



\paragraph{Main application: decision stumps, or axis-aligned halfspaces.} A natural class of halfspaces with bounded directions is the class of \emph{axis-aligned} halfspaces. Such functions are equivalent to \emph{decision stumps} (depth-one decision trees), which classify data by thresholding with respect to a single feature. They are the basic building blocks of decision trees, commonly used in practice as base learners in ensemble learning techniques such as bagging and boosting~\cite{DBLP:journals/ml/Holte93,DBLP:conf/ecml/OliverH94,DBLP:conf/aaai/BanihashemHS23}. For learning decision stumps over arbitrary point sets in $\RR^d$, there is a gap of $O(d \log n)$ vs. $\Omega(d + \log n)$ left open in the active learning literature \cite{DBLP:journals/jmlr/HannekeY15}. As a direct corollary of our main result, we obtain a deterministic algorithm using $O(d + \log n)$ queries, closing the book on this fundamental concept class. 

Beyond depth one, there is a large body of research on decision tree learning over the Boolean hypercube, $\{0,1\}^d$ (e.g. \cite{DBLP:journals/jacm/BlancLQT22, DBLP:conf/innovations/BlancLT20,DBLP:conf/alt/BshoutyH19,DBLP:journals/iandc/EhrenfeuchtH89}); however, we note that there is an $\Omega(n)$ query lower bound for learning \emph{Boolean} decision trees of depth at least two\footnote{Let $X = \{(i,-i) \colon i \in [-n,n]\} \subset \RR^2$ and for each $i \in [-n,n]$ consider the depth two decision tree defined by $f_i(x) = \mathbf{1}(x_1 \geq i  \wedge x_2 \geq -i)$. Then, $f_i(i,-i) = 1$ and all other points in $X$ are labeled $0$. Thus, learning the labels exactly is equivalent to an unstructured search problem, implying we need $\Omega(n)$ queries.} when the domain $X \subset \RR^d$ is arbitrary. On the other hand, an alternative generalization is the problem of learning \emph{multi-class} decision trees, where each leaf must correspond to a distinct label, over arbitrary $X \subset \RR^d$. We do not know of any prior works studying this question and we believe this may be an interesting direction for future work. 

Beyond the axis-aligned setting, a more general class of halfspaces where our results have implications are \emph{sparse discrete} halfspaces: given a sparsity parameter $s \in \{1,\ldots, d\}$, suppose we restrict the normal vector to lie in $\{-1,0,1\}^d$ (or a larger hypergrid) with at most $s$ non-zero coordinates. The total number of directions for such halfspaces is bounded by $d^{O(s)}$ and so we obtain a learner with query complexity $d^{O(s)} + O(\log n)$ for exactly learning such a halfspace over an arbitrary point set $X \subset \RR^d$. This notion of sparse halfspaces has been considered in numerous works (e.g. \cite{DBLP:conf/approx/MatulefORS09}, \cite{DBLP:journals/tcs/AbasiAB16}, \cite{DBLP:journals/corr/abs-2502-16008} for a small sampling), but not for the problem we consider, to the best of our knowledge.





\paragraph{Learning preliminaries.} Besides exact learning, we design algorithms for approximate learning in both the realizable and tolerant settings. For two Boolean-valued functions $f,g \colon X \to \{0,1\}$ over the same domain, we use the notation $\norm{f-g}_1$ to denote their Hamming distance {\em normalized} by $|X|$. We consider learning a concept class $\cC \colon X \to \{0,1\}$ in the  settings defined below:
\begin{itemize}[leftmargin=*]
    \item \textbf{Exact learning:} Given $f \in \cC$, learn the label of every point in $X$.
    \item \textbf{Proper, realizable $(\eps,\delta)$-PAC learning:} Given $\eps > 0, \delta > 0$, and $f \in \cC$, learn $h \in \cC$ such that $\Pr_h[\norm{f - h}_1 > \eps] < \delta$. 
    \item \textbf{Proper, $c$-tolerant $(\eps,\delta)$-PAC learning:} Given $c > 0, \eps > 0, \delta > 0$, and $f$ such that $\norm{f - g}_1 \leq c\eps$ for some $g \in \cC$, learn $h \in \cC$ such that $\Pr_h[\norm{f - h}_1 > \eps] < \delta$.
\end{itemize}

Our algorithms for PAC-learning utilize our exact learning algorithm as a black-box, in conjunction with a novel randomized binary search procedure for tolerantly learning monotone functions, and a simple result from property testing of monotone functions.




\subsection{Results}

We now state our main results. Given $X \subset \RR^d$ and a set $V = \{u_1,\ldots,u_D\} \subset S^{d-1}$ of $D$ unit vectors, let $\cH(V,X)$ denote the set of all (non-homogeneous) halfspaces defined over $X$ with normal vector in $V$. That is, $f \in \cH(V,X)$ iff $f(x) = \mathbf{1}(\langle u, x \rangle  > t)$ for some $u \in V$ and $t \in \RR$. All of our algorithms run in polynomial time.\footnote{Since our focus is on query complexity we do not describe how to optimize the time complexity of our algorithms, nor do we provide a detailed runtime analysis.}

\begin{theorem} [$D$-Directional Halfspace Learning] \label{thm:D-direction-HS} Given arbitrary set $X \subset \RR^d$ of $n$ points and $V \subset S^{d-1}$ of $D$ unit vectors, the following learning algorithms exist for the concept class $\cH(V,X)$. 
\begin{enumerate}
    \item A deterministic exact learner using $O(\min(D + \log n,n))$ queries.
    \item A proper, realizable $(\eps,0)$-PAC learner using $O(\min(D + \log(\frac{1}{\eps}), \frac{\log D}{\eps}))$ queries.
    \item A proper $c$-tolerant $(\eps,\delta)$-PAC learner using $O(\min(D + \log(\frac{1}{\eps}),\frac{1}{\eps}) \cdot \log D)$ queries, for any constant $\delta > 0$, and sufficiently small constant $c \in (0,1)$.
\end{enumerate}    
\end{theorem}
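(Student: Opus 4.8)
The plan is to reduce all three parts to the combinatorial ``$D$ monotone orderings'' problem: let $\sigma_i$ order $X$ by $\langle x,u_i\rangle$ (ties broken consistently), so that if $f(x)=\mathbf{1}(\langle u_{i^\star},x\rangle>t)$ then $f\circ\sigma_{i^\star}^{-1}\colon[n]\to\{0,1\}$ is monotone nondecreasing. In all parts, once the labels of enough points of $X$ have been pinned down, a \emph{proper} hypothesis is read off with no further queries by brute-force search over the $O(Dn)$ candidate (direction, threshold) pairs; only queries are economized.

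For Part~1, querying all of $X$ gives the $O(n)$ bound, so the content is the $O(D+\log n)$ learner, which I would build as a ``parallel binary search.'' Maintain a set $L\subseteq[D]$ of \emph{live} directions not yet falsified (so $i^\star\in L$ throughout); for each $i\in L$ a binary-search interval $I_i$ of positions in $\sigma_i$-order that must contain the threshold of direction $i$ whenever $i=i^\star$; and a growing set of points whose labels have been \emph{deduced}, each deduction being logically forced (of the form ``for all $i\in L$, [some one-sided condition], hence $f(q)=b$'') and hence correct regardless of which direction is $i^\star$. The core step is an averaging argument: the middle third of $I_i$ contains $|I_i|/3$ points of $X$, so some single point $p$ lies in the middle third of $I_i$ for $\Omega(1)$ of the live directions as long as the intervals are still large on average; querying $f(p)$ shrinks all of those $I_i$ by a constant factor and, intersecting the one-sided deductions ``every point $\le_i p$ is $0$'' (resp.\ $\ge_i p$ is $1$) over $i\in L$, also certifies a constant fraction of the still-uncertain points. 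When the pigeonhole bound fails, many intervals are already small, and a few extra queries either deduce more labels or expose a non-monotone triple in some $\sigma_i$, removing $i$ from $L$. Amortized, $O(1)$ queries either decrement $|L|$ or shrink the uncertain set by a constant factor, for $O(D+\log n)$ total. \emph{I expect this amortization to be the main obstacle}: the intervals $I_i$ live in $D$ different orders while ``knowing a label'' is a global property of $X$, so one must choose queries whose single bit is simultaneously useful across many orders and show that the two-part potential (roughly $|L|$ plus $\log$ of the uncertain set) drops by a constant per $O(1)$ queries, including once the $I_i$ have become small.

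For Part~2, the first term follows by running the Part~1 algorithm but halting as soon as the uncertain set has size $\le\eps n$: all deduced labels are correct, so $f$ --- and hence some brute-force proper hypothesis --- agrees with them, with error at most $\eps$; this is deterministic and costs $O(D+\log(1/\eps))$ queries. For the second term (which wins when $D\gg1/\eps$) I would argue via VC theory: $\cH(V,X)$ is a union of $D$ threshold families, each of VC dimension one, so it and its symmetric-difference family have VC dimension $O(\log D)$; since $X$ is known in advance one can deterministically select $S\subseteq X$ with $|S|=O(\tfrac{\log D}{\eps})$ that hits every symmetric-difference region of size $>\eps n$, query $f$ on $S$, and output any proper hypothesis consistent with $f|_S$, whose error is then at most $\eps$ with certainty. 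The stated bound is the better of the two.

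For Part~3, write $f$ as a $c\eps$-corruption of some $g\in\cH(V,X)$; since $\norm{f-g}_1\le c\eps$ it suffices to learn $g$ to within $(1-c)\eps$. The plan is to rerun the Part~1 parallel binary search with each query replaced by a \emph{randomized robust query}: to decide which half of a window of size $w$ holds the threshold, sample $O(1)$ points from each half and take the majority; while $w=\Omega(\eps n)$ the corruption inside the window is a small fraction of $w$, so the majority is correct with constant probability, and amplifying to high probability costs an $O(\log(\cdot))$ factor that, after a union bound over the $\le D$ directions and $O(D+\log(1/\eps))$ steps, yields the extra $\log D$; we halt once $w=\Theta(\eps n)$. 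For the second term $\tfrac1\eps\log D$, I would additionally use $O(1/\eps)$-query monotonicity testing of the sequences $f\circ\sigma_i^{-1}$ to discard, over $O(\log D)$ rounds, directions that are $\eps$-far from monotone (hence far from being $i^\star$). The difficulties here are establishing correctness of the robust parallel search against corruption concentrated near the true threshold --- which is exactly what forces the $\Theta(\eps n)$ stopping rule --- and tracking the failure probability across directions so that a $\log D$ rather than a $\log(Dn)$ factor suffices.
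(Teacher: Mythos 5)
For Part~1 there is a genuine gap, and you flag the spot yourself. The averaging step you rely on---that some point of $X$ lies in the middle third of $I_i$ for a constant fraction of the live directions---requires $\sum_{i\in L}|I_i|=\Omega(|L|\cdot n)$, which holds only while the intervals are still linear in $n$; after $O(1)$ rounds this degrades and there need be no point covered by more than one middle third. Worse, even when such a point $p$ exists and you learn $f(p)=0$, a point $q$ can only be \emph{soundly} certified if $q\le_i p$ for \emph{every} $i\in L$, not just the directions for which $p$ was in the middle third; the intersection of these down-sets across all of $L$ can be tiny, so the claim that this ``certifies a constant fraction of the still-uncertain points'' is unjustified. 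Your fallback (``when the pigeonhole fails, many intervals are small, and a few extra queries either deduce more labels or expose a non-monotone triple'') is exactly the hard part and is left unresolved. The paper's argument is structurally different: each round it queries $O(1)$ points in a \emph{single} ordering $\sigma_k$ to cut $[n]$ into thirds, which induces windows $W_i$ for all $i\in S$ with only the one-sided guarantee $\min_i|W_i|<n/3$. It then colours points of $Z$ blue, red, or purple according to whether they lie below, above, or inside every window, uses $|P|\le\min_i|W_i|<n/3$ to deduce that one of $B,R$ has size $\ge n/3$, and runs a short case analysis (at most $3$ further queries) that either eliminates a direction, finds a boundary pair, or locates a ``strong purple'' point separating all blue points from all red points \emph{in every permutation simultaneously}. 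That single point is the discard certificate your intersection step was missing: it is sound for every surviving direction at once, so the larger of $B,R$ can safely be thrown away.

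For Part~2 your first branch (run the exact learner until $\le\eps n$ points remain) matches the paper. The second branch is a genuinely different route: the paper obtains its second term by invoking the randomized tolerant algorithm, while you propose a deterministic $\eps$-net for a VC-$O(\log D)$ family. This is plausible in spirit but not automatic---the generic deterministic $\eps$-net bound for a class of VC dimension $v$ is $O((v/\eps)\log(1/\eps))$, so getting $O(\log D/\eps)$ with no extra log factor needs a further argument you have not supplied. For Part~3 the paper does \emph{not} robustify the parallel binary search. Instead it treats the exact learner from Part~1 as a black box on $O(\log(D/\delta))$ fresh random samples of size $O(1/(\eps\delta))$ each, scores every direction in every round by whether the (exactly recovered) sample labels look monotone in that ordering, uses a one-dimensional monotonicity-testing lemma plus Hoeffding to argue the winner is $O(\eps\delta)$-close to monotone, and finishes with a one-dimensional randomized binary search. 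Your robust-majority queries and direction-discarding idea is reasonable in outline, but you would have to carry out precisely the ``corruption concentrated near the threshold'' analysis you identify as the difficulty; the paper sidesteps it by never robustifying the search at all.
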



\noindent \textbf{Decision stumps.} Next, we observe that \Cref{thm:D-direction-HS} implies improved bounds (in some cases optimal) for decision stumps as an immediate corollary. Given $i \in [d]$ and $t \in \RR$, the decision stump $\ds_{i,t} \colon\RR^d \to \{0,1\}$ is defined as $\ds_{i,t}(x) = \mathbf{1}(x_i \geq t)$. We denote the class of decision stumps over $\RR^d$ as $\DS_d$. The work of \cite{DBLP:journals/jmlr/HannekeY15} gives general upper and lower bounds on the query complexity of active learning in terms of a quantity which they call the \emph{star number} of a concept class. For learning decision stumps, their results imply lower and upper bounds of 
\[
\Omega\left(\min\Big(d + \log \frac{1}{\eps}, \frac{1}{\eps}\Big) + \min \Big(\log d, \log n\Big)\right) \text{ and } O\Big(\min\Big(d \log\Big(\frac{1}{\eps}\Big), \frac{\log d \log (1/\eps)}{\eps},n\Big)\Big)
\]
queries, respectively for $\eps$-realizable learning with constant success probability. Note that, replacing ``$d$" by ``$D$", this lower bound also holds for the $D$-directional halfspace learning problem considered in \Cref{thm:D-direction-HS}, since axis-aligned halfspaces are a special case. 
For completeness, we give a formal proof of these bounds in \Cref{sec:star-num}. For example, the previous known bounds for exact learning are $\Omega(\min(d + \log n),n)$ vs. $O(\min(d \log n,n))$. As an immediate corollary of \Cref{thm:D-direction-HS}, we obtain the following bounds for learning decision stumps which are tight in the exact learning setting and are within a $O(\log d)$ factor of the lower bound in the approximate learning settings.  

\begin{corollary} [Decision Stump Learning] \label{cor:decision-stump} Given an arbitrary set $X \subset \RR^d$ of $n$ points, the following learning algorithms exist for learning the class $\DS_d$ of decision stumps over $X$. 
\begin{enumerate}
    \item A deterministic exact learner using $O(\min(d + \log n,n))$ queries.
    \item A proper, realizable $(\eps,0)$-PAC learner using $O(\min(d + \log(\frac{1}{\eps}), \frac{\log d}{\eps}))$ queries.
    \item A proper, $c$-tolerant $(\eps,\delta)$-PAC learner using $O(\min(d + \log(\frac{1}{\eps}),\frac{1}{\eps}) \cdot \log d)$ queries, for any constant $\delta > 0$, and sufficiently small constant $c \in (0,1)$.
\end{enumerate}    
\end{corollary}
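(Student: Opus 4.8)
The plan is to read off \Cref{cor:decision-stump} as an immediate instantiation of \Cref{thm:D-direction-HS}: I would simply exhibit $\DS_d$, restricted to a given point set $X \subset \RR^d$, as a $D$-directional halfspace class with $D = d$, and then quote the three bounds of \Cref{thm:D-direction-HS} verbatim with "$D$" replaced by "$d$". Concretely, take $V = \{e_1,\ldots,e_d\} \subset S^{d-1}$ to be the standard basis, so $D = |V| = d$. Since $\langle e_i, x\rangle = x_i$, the halfspace $x \mapsto \mathbf{1}(\langle e_i, x\rangle > t)$ is, as a function on $X$, a decision stump on coordinate $i$ — and conversely.

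The only point that needs a line of care is the strict-versus-nonstrict threshold convention: $\cH(V,X)$ is defined with $\langle u,x\rangle > t$, whereas $\ds_{i,t}(x) = \mathbf{1}(x_i \ge t)$. Here I would use that $X$ is finite: for a fixed coordinate $i$, the family of subsets of $X$ of the form $\{x \in X : x_i > t\}$ (over $t \in \RR$) is identical to the family $\{x \in X : x_i \ge t'\}$ (over $t' \in \RR$), since one can always replace a strict threshold $t$ by the smallest coordinate value $x_i$ appearing in $X$ that exceeds $t$ (or by $-\infty$ if none does). Hence $\cH(\{e_1,\ldots,e_d\},X)$ and $\DS_d$ are literally the same concept class once restricted to $X$; in particular a hypothesis returned by the (proper) learners of \Cref{thm:D-direction-HS} lies in this common class and can be written as a decision stump, so properness transfers for free. (If one additionally wanted the "decreasing" stumps $\mathbf{1}(x_i \le t)$, taking $V = \{\pm e_1,\ldots,\pm e_d\}$ gives $D = 2d$, which changes none of the bounds up to constant factors.)

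It then remains to substitute $D = d$ into the three parts of \Cref{thm:D-direction-HS}: part~(1) yields a deterministic exact learner using $O(\min(d + \log n, n))$ queries; part~(2) yields a proper realizable $(\eps,0)$-PAC learner using $O(\min(d + \log(\tfrac{1}{\eps}), \tfrac{\log d}{\eps}))$ queries; and part~(3) yields a proper $c$-tolerant $(\eps,\delta)$-PAC learner using $O(\min(d + \log(\tfrac{1}{\eps}), \tfrac{1}{\eps}) \cdot \log d)$ queries, for any constant $\delta > 0$ and sufficiently small constant $c$. These are exactly the three claimed bounds. There is no real obstacle in this argument — it is pure bookkeeping — the only substantive content was already proved in \Cref{thm:D-direction-HS}; the remaining subtleties are the inequality convention above and the observation that properness is inherited by a subclass that happens to coincide with the parent class on $X$.
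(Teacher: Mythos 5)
Your proof is correct and matches the paper's approach: the paper treats \Cref{cor:decision-stump} as an immediate instantiation of \Cref{thm:D-direction-HS} with $V = \{e_1,\ldots,e_d\}$, so $D = d$, and leaves the remaining bookkeeping implicit. Your explicit handling of the strict-versus-nonstrict threshold convention on a finite $X$, and the observation that properness transfers because $\cH(\{e_1,\ldots,e_d\},X)$ and $\DS_d|_X$ coincide, fills in exactly the small details the paper elides.
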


\subsection{Related Work} \label{sec:related-work}

We study active learning in the \emph{pool-based} setting: we are given access to a set $X \subset \mathbb{R}^d$ of $n$ unlabeled points and may adaptively query the labels of these points \cite{settles2009active}. For arbitrary non-homogeneous halfspaces in the exact, no-synthesis setting, there is a strong worst-case lower bound on the number of queries needed to label every point in $X$: even if the points lie on a circle in $\mathbb{R}^2$, the worst-case label complexity is $\Omega(n)$ queries \cite{DBLP:conf/nips/Dasgupta04}. 

There are various ways to circumvent this barrier. One way is to introduce additional assumptions about the label function and the data: assume that the halfspace goes through the origin and that the data is drawn from a nice distribution. For example, standard choices are the uniform distribution over the unit sphere or a log-concave distribution  \cite{DBLP:journals/jmlr/DasguptaKM09,DBLP:conf/colt/BalcanL13}. Margin-based approaches exploit additional structure near the decision boundary, such as margin separability, small boundary mass, or noise conditions relating label uncertainty to distance from the boundary \cite{DBLP:conf/colt/BalcanBZ07,DBLP:journals/jmlr/GonenSS13}. In a related online selective sampling setting, margin information can also be used to decide when to query labels and to obtain mistake bounds in terms of hinge or soft-margin losses \cite{DBLP:journals/jmlr/Cesa-BianchiGZ06a}. Finally, another common way to avoid the lower bound is to strengthen the query model, for example by allowing point synthesis or membership queries, where the learner may query labels of points outside the given pool \cite{DBLP:conf/focs/HopkinsKLM20,DBLP:conf/nips/DiakonikolasKM24}. The downside to membership queries is that, in practice, synthetic data may not have a sensible label \cite{baum1992query,DBLP:conf/nips/Dasgupta04}.

This work parametrizes the hardness of learning halfspaces in a new, purely combinatorial way: we restrict the class of halfspaces to structured subclasses with \emph{bounded directionality}. We assume that the normal vectors of the halfspaces are contained in a set of $D$ known directions. Importantly, this assumption does not restrict the data in any way, admitting arbitrary, worst-case point sets. We also do not assume access to the stronger membership query oracle. 

The family of decision stumps is a canonical example of halfspaces with bounded directionality. For exact learning of decision stumps, prior general bounds imply an $O(d \log n)$ upper bound and a $\Omega(d + \log n)$ lower bound \cite{DBLP:journals/jmlr/HannekeY15}; our $O(d + \log n)$ bound closes this gap. In spirit, our algorithm resembles classic active learning algorithms like CAL and $A^2$, which only query points that lead to certain progress \cite{DBLP:journals/ml/CohnAL94,DBLP:conf/icml/BalcanBL06}. However, our algorithm is more \emph{aggressive} in the sense that not all progress is equal \cite{DBLP:journals/tcs/Dasgupta11}. Our algorithm identifies and exploits more informative points to achieve optimal query complexity.

\vspace{11pt}\noindent \textbf{Organization.} We prove our main result, item (1) of \Cref{thm:D-direction-HS}, in \Cref{sec:exact}. This result is used as a black-box in the proof of items (2) and (3) of \Cref{thm:D-direction-HS}, which we prove in \Cref{sec:approximate}. 

\section{Exact Learning} \label{sec:exact}


In this section we prove item (1) of \Cref{thm:D-direction-HS}. That is, we design a deterministic exact learning algorithm for halfspaces whose normal vector belongs to a known set $V = \{u_1,\ldots,u_D\} \subset S^{d-1}$ of size $D$, with the optimal $O(\min(D + \log n,n))$ query complexity. In fact, we consider a generalized formulation of our problem. Given a set of $n$ points $X \subset \RR^d$, let $\sigma_1,\ldots,\sigma_D \colon X \to [n]$ denote the $D$ permutations on $X$ induced by ordering the projections of each $x \in X$ onto each $u \in V$.\footnote{Ties can be broken arbitrarily.} That is, $\sigma_i(x) = j$ iff $x$ has the $j$'th smallest projection, $\langle x, u_i\rangle$. Then, a labeling $f \colon X \to \{0,1\}$ is consistent with a halfspace $\mathbf{1}(\langle x,u_{i^{\star}} \rangle > t^{\star})$ for some $i^{\star} \in [D], t^{\star} \in \RR$ iff $f$ is \emph{non-decreasing} when $X$ is ordered according to $\sigma_{i^{\star}}$. That is, $f \circ \sigma_{i^{\star}}^{-1} \colon [n] \to \{0,1\}$ is a monotone function. Therefore, it suffices to consider the following generalized learning problem. 


\begin{problem} [Learning Shuffled Monotone Functions] \label{def:LSMF} Let $X$ be a set of $n$ elements with a Boolean labeling $\ell \colon X \to \{0,1\}$. Let $\sigma_1,\ldots,\sigma_D \colon X \to [n]$ be $D$ permutations with the guarantee that for some unknown $i^{\star} \in [D]$, the labeling $\ell \circ \sigma_{i^{\star}}^{-1} \colon [n] \to \{0,1\}$ is non-decreasing. The problem of learning shuffled monotone functions is to learn the labeling using as few queries to $\ell$ as possible. \end{problem}

By the argument preceding \Problem{def:LSMF}, any algorithm for learning shuffled monotone functions with $n$ points and $D$ permutations immediately implies an algorithm for learning $\cH(V,X)$ where $X \subset \RR^d$ is an arbitrary set of $n$ points and $V \subset S^{d-1}$ is an arbitrary set of $D$ unit vectors. 

We start by describing a simple algorithm that achieves $O(D \log n)$ query complexity and then move on to obtaining the optimal $O(D + \log n)$ query learner in \Cref{sec:LSMF}. We begin with a simple claim that gives a useful and general subroutine.

\begin{claim} \label{clm:candidates} Let $f \colon X \to \{0,1\}$ and let $C$ be a collection of ``candidate" labeling functions over $X$. There is an algorithm $\mathsf{Contender}(f,C,X)$ using at most $|C|$ label queries to $f$ to return a function $h \in C$ such that $h = f$ whenever $f \in C$.
\end{claim}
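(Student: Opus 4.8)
The plan is to iterate through the candidate labelings in $C$ one at a time, maintaining a single "current champion" $h$, and to use at most one query to $f$ per candidate to test whether the newcomer could possibly be $f$. Concretely, initialize $h$ to be an arbitrary element of $C$. Now process the remaining candidates $g \in C$ one by one. When we encounter $g$, if $g$ agrees with $h$ everywhere on $X$ we discard $g$ (it is redundant) and move on without spending a query. Otherwise, there is some point $x \in X$ with $h(x) \neq g(x)$; we query $f(x)$. If $f(x) = g(x)$, we update the champion to $h \leftarrow g$; if $f(x) = h(x)$, we keep $h$ and discard $g$. After all candidates have been processed, output the final champion $h$.

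The key invariant to carry through the loop is: \emph{if $f \in C$, then either the current champion $h$ equals $f$, or $f$ has not yet been processed}. This holds at initialization trivially. For the inductive step, suppose it holds before processing $g$. If $g = f$: then before processing, either $h = f$ already (and on the distinguishing point $x$ we have $f(x) = g(x)$, so $g$ becomes the new champion, still equal to $f$ — or if $h$ and $g$ agree everywhere then $h = g = f$ and we correctly keep $h$), or $h \neq f$ and the query at the distinguishing point $x$ returns $f(x) = g(x)$, so we set $h \leftarrow g = f$. If $g \neq f$: if $h = f$ already, then on the distinguishing point $f(x) = h(x) \neq g(x)$, so we keep $h = f$; if $h \neq f$ and $f$ is still unprocessed, then processing $g$ does not change the fact that $f$ is unprocessed (we only ever set $h \leftarrow g \neq f$), so the invariant is maintained. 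At termination every candidate has been processed, so the invariant forces $h = f$ whenever $f \in C$.

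For the query bound: we spend at most one query per candidate in $C$, and in fact none on the very first candidate (which we adopt without testing) and none on candidates that agree with the current champion everywhere, so the total is at most $|C| - 1 \le |C|$ queries. This is a plan with no real obstacle; the only point requiring care is the bookkeeping in the case analysis of the invariant — in particular making sure that when $g$ and $h$ agree everywhere on $X$ we are justified in skipping the query, which is fine precisely because agreeing everywhere means $g = h$ as functions on $X$, so $g$ carries no new information.
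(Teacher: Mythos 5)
Your proposal is correct and takes essentially the same approach as the paper: the paper repeatedly picks two disagreeing candidates, queries a distinguishing point, and eliminates whichever one is wrong (so the true $f$, if present, is never eliminated), whereas you implement this as a single pass with a running "champion" — a specific scheduling of the same pairwise-elimination idea, with the same invariant and the same $\le |C|-1$ query bound.
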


\begin{proof} Suppose there exists $x \in X$ such that $h_1(x) \neq h_2(x)$ for some $h_1, h_2 \in C$. Then by querying $f(x)$, we either observe that $h_1(x) \neq f(x)$, in which case we remove $h_1$ from $C$, or we observe $h_2(x) \neq f(x)$, in which case we remove $h_2$ from $C$. Otherwise, all $h \in C$ define the same labeling over $X$ and now we return this labeling. Observe that if $f \in C$, then it will never be removed by this process. Thus, this algorithm satisfies the guarantee of the claim. \end{proof}



Now, given $i \in [D]$ and $j \in [n]$, let $h^{(i)}_j \colon X \to \{0,1\}$ be defined by $h^{(i)}_j(x) = \mathbf{1}(\sigma_i(x) \geq j)$. In the shuffled monotone function learning problem (\Cref{def:LSMF}), we are guaranteed that there is some $i^{\ast} \in [D]$, $j^{\ast} \in [n]$ such that $h^{(i^{\ast})}_{j^{\ast}} = \ell$. We first observe that combining \Clm{clm:candidates} with a basic binary search yields a simple $O(D \log n)$ query algorithm.

\begin{observation} \label{obs:simple} There is a deterministic algorithm for learning shuffled monotone functions using $O(\min(D \log n,n))$ queries. \end{observation}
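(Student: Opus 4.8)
The plan is to prove \Cref{obs:simple} by reducing the shuffled monotone function problem to running $D$ independent binary searches, one per permutation, and then stitching the results together with \Cref{clm:candidates}. First I would handle the $O(n)$ bound trivially: querying $\ell$ on all of $X$ costs $n$ queries and learns everything, so it only remains to establish the $O(D\log n)$ bound, which dominates only when $D\log n \le n$.

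For the $O(D\log n)$ bound, the key observation is that for each fixed $i \in [D]$, the family $\{h^{(i)}_j : j \in [n+1]\}$ is a nested chain of labelings (as $j$ increases the $1$-set shrinks monotonically along $\sigma_i$), so identifying which member of this chain agrees with $\ell$ \emph{on the $\sigma_i$-sorted order} is exactly a one-dimensional threshold-finding problem. Concretely, I would do the following: for each $i \in [D]$, run a binary search over $j \in \{1,\ldots,n+1\}$ to find the largest $j_i$ such that querying the point $x$ with $\sigma_i(x) = j_i$ (i.e.\ the $j_i$-th element in the $\sigma_i$-order) returns $\ell(x) = 1$, or determine that no such point exists; each such search costs $O(\log n)$ queries, for $O(D\log n)$ total. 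This produces a candidate set $C = \{h^{(1)}_{j_1}, \ldots, h^{(D)}_{j_D}\}$ of size at most $D$. The crucial point is that $\ell$ itself lies in $C$: by the promise in \Cref{def:LSMF}, $\ell \circ \sigma_{i^\star}^{-1}$ is monotone non-decreasing, so $\ell = h^{(i^\star)}_{j^\star}$ for the appropriate threshold $j^\star$, and the binary search along $\sigma_{i^\star}$ — which is a genuine monotone sequence — correctly recovers $j^\star = j_{i^\star}$. Finally I would invoke $\mathsf{Contender}(\ell, C, X)$ from \Cref{clm:candidates}, using at most $|C| \le D$ additional queries, to extract the member of $C$ equal to $\ell$, thereby learning all labels. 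The total is $O(D\log n + D) = O(D\log n)$, and combined with the trivial $O(n)$ algorithm we get $O(\min(D\log n, n))$.

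One subtlety worth being careful about: for $i \ne i^\star$ the sequence $\ell \circ \sigma_i^{-1}$ need not be monotone, so the binary search along $\sigma_i$ may return a meaningless threshold $j_i$ and hence a garbage candidate $h^{(i)}_{j_i}$. This is harmless precisely because \Cref{clm:candidates} only promises to return $h = f$ \emph{when} $f \in C$, which holds here, and otherwise makes no demands — the garbage candidates simply get eliminated (or, in the degenerate case where a garbage candidate happens to coincide with $\ell$ everywhere, it is still a correct labeling). I expect the main (minor) obstacle is simply phrasing the binary search cleanly: one must be careful that "no point returns $1$" and "every point returns $1$" are both captured (e.g.\ by allowing $j$ to range over $\{1,\ldots,n+1\}$ with $h^{(i)}_{n+1} \equiv 0$), and that the search along the truly monotone order $\sigma_{i^\star}$ provably lands on $j^\star$. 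Neither is deep, so the proof is short.
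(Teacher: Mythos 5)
Your proposal matches the paper's proof essentially exactly: handle $n < D\log n$ by querying everything, otherwise run an independent binary search along each $\sigma_i$-order to produce a threshold candidate $h^{(i)}_{j_i}$, collect the $D$ candidates in $C$, and finish with $\mathsf{Contender}$ from \Cref{clm:candidates} for an extra $O(D)$ queries. One small slip in wording: you say the binary search finds the \emph{largest} $j_i$ such that the $j_i$-th element in $\sigma_i$-order is labeled $1$, but for a non-decreasing sequence $0\cdots 0\,1\cdots 1$ that would always be $n$; you want the \emph{smallest} such index (equivalently, the largest index with label $0$, as the paper phrases it), which is what the rest of your argument — ``correctly recovers $j^\star$'' — clearly intends.
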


\begin{proof} If $n < D\log n$, we can simply query the label of every point. Otherwise, the algorithm is defined as follows. Initialize a candidate set $C \gets \emptyset$. For each $i \in [D]$, perform a standard binary search on $\ell \circ \sigma_i^{-1} \colon [n] \to \{0,1\}$, which returns an index $j_i \in \{0\} \cup [n]$ such that $\ell \circ \sigma_i^{-1}(j_i) = 0$ and $\ell \circ \sigma_i^{-1}(j_i + 1) = 1$. (Define $\ell \circ \sigma_i^{-1}(0) = 0$, $\ell \circ \sigma_i^{-1}(n+1) = 1$.) Add $h^{(i)}_{j_i}$ to $C$. Afterwards, run the $\mathsf{Contender}$ procedure of \Clm{clm:candidates} on $C$.

By our use of binary search we are guaranteed that $j_{i^{\star}}$ satisfies $\ell \circ \sigma_{i^{\star}}^{-1}(j) = \mathbf{1}(j > j_{i^{\star}})$. Then, the correctness of the algorithm follows by \Clm{clm:candidates}. The number of queries made in the search phase is $O(D \log n)$ and the $\mathsf{Contender}$ procedure of \Clm{clm:candidates} uses at most $D$ queries. \end{proof}

On the other hand, the lower bound of \Cref{thm:Hanneke-Yang} asserts that $\Omega(\min(D + \log n,n))$ are needed (set $\eps < 1/n$). Note that the algorithm of \Obs{obs:simple} treats each $\ell \circ \sigma_i^{-1} \colon [n] \to \{0,1\}$ separately, by independently running a binary search procedure on each. In particular, it ignores the fact that a single query reveals information in every $\ell \circ \sigma_i^{-1}$ simultaneously. Designing an algorithm which exploits this requires careful consideration of the structure of the permutations $\sigma_1,\ldots,\sigma_D$ to choose which points to query. We show that by carefully considering this structure we can obtain an algorithm making $O(\min(D + \log n,n))$ queries, thus closing this gap. 

At a high level, this query complexity comes from the fact that we are able to spend only $O(1)$ queries to either remove a permutation from consideration (thus reducing $D$ by one), or find a large set of points to safely remove from consideration (reducing $n$ by a constant factor), effectively performing a step of binary search simultaneously in every $\ell \circ \sigma_i^{-1}$. 




\subsection{Optimal Deterministic Algorithm} \label{sec:LSMF}

In this section we prove the following theorem, which immediately implies item (1) of \Cref{thm:D-direction-HS}.

\begin{theorem} \label{thm:LSMF} For any $D,n$, there is a deterministic algorithm (\Alg{LSMF}), $\mathsf{ShuffledMonotoneLearner}$, that exactly learns a shuffled monotone function using $O(D + \log n)$ label queries. \end{theorem}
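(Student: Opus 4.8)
## Proof Proposal for Theorem \ref{thm:LSMF}

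\textbf{Overall approach.}\textbf{Setup and high-level strategy.} The plan is to maintain a set $S \subseteq [D]$ of \emph{surviving} permutations (those not yet ruled out as $i^{\star}$) together with, for each $i \in S$, a \emph{window} $W_i = \{\ell_i, \ell_i+1, \dots, r_i\} \subseteq \{0,1,\dots,n+1\}$ with the invariant that we have already verified $\ell\circ\sigma_i^{-1}(j)=0$ for all $j \le \ell_i$ and $\ell\circ\sigma_i^{-1}(j)=1$ for all $j \ge r_i$; thus if $i = i^{\star}$ then the true threshold lies in $W_i$. Initially $S=[D]$ and every $W_i=\{0,\dots,n+1\}$. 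Call a point $x \in X$ \emph{resolved} if every hypothesis still consistent with our observations, i.e.\ every $h^{(i)}_j$ with $i \in S$ and $j \in W_i$, assigns $x$ the same label; let $U$ be the set of unresolved points. Once $U = \emptyset$ we are done (output the common labeling), so the goal is to drive $|U|$ to zero. When $|S| = 1$, say $S=\{i\}$, the labeling is forced to be $h^{(i)}_{j}$ for some $j \in W_i$, so an ordinary binary search on $W_i$ followed by one call to $\mathsf{Contender}$ (\Cref{clm:candidates}) on the $O(1)$ remaining candidates finishes in $O(\log n)$ queries; hence we may assume $|S|\ge 2$ throughout the main loop. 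Also, if $n = O(D+\log n)$ we simply query every point, so assume otherwise.

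\textbf{The main loop and the core lemma.} A single query $\ell(x)$ affects every surviving $i$ at once: it places position $\sigma_i(x)$ on the ``$0$ side'' or the ``$1$ side'', which either \emph{contradicts} the invariant for $i$ (e.g.\ $\ell(x)=0$ but $\sigma_i(x)\ge r_i$), in which case $i$ is removed from $S$; or \emph{refines} $W_i$ (replacing $\ell_i$ by $\sigma_i(x)$, resp.\ $r_i$ by $\sigma_i(x)$, when $\sigma_i(x)$ lies strictly inside $W_i$); or is vacuous for $i$. The heart of the argument, and the step I expect to be the main obstacle, is a combinatorial lemma asserting that using only $O(1)$ queries, chosen as a function of the current $S$ and the windows $\{W_i\}_{i \in S}$, we can always guarantee that \emph{either} at least one permutation is removed from $S$, \emph{or} the unresolved set $U$ shrinks by a constant factor. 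The intuition is to partition $U$ into per-permutation ``active blocks'' $B_i = \sigma_i^{-1}\big((\ell_i,r_i)\big)$ (each contiguous in the $\sigma_i$-order, with $U = \bigcup_{i\in S} B_i$) and to look for a point that is simultaneously ``central'' for a dominant fraction of the blocks: querying such a point halves those blocks regardless of its label; and if no single point is central for many blocks, the blocks must be interleaved in a way that a short batch of queries can expose a contradiction in, hence delete, several permutations. Making this dichotomy precise requires quantifying how $D$ linear orders on the same $n$ points can interleave, e.g.\ via a counting/pigeonhole argument on block overlaps, and is where the real work lies.

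\textbf{Wrapping up via a potential argument.} Granting the core lemma, define the potential $\Phi = c\cdot|S| + \log_2\big(|U|+1\big)$ for a suitable constant $c$. Each iteration of the main loop spends $O(1)$ queries and decreases $\Phi$ by at least a constant (either $|S|$ drops by one, or $\log_2|U|$ drops by a constant), while $\Phi$ starts at $O(D+\log n)$. Hence after $O(D+\log n)$ queries we reach $|S|=1$ or $|U|=0$, at which point the finishing step above costs $O(\log n)$ further queries, for a total of $O(D+\log n)$. The remaining routine points are: checking that the window invariants are preserved by every refinement step, that every resolved point genuinely has a forced label (so the final output is correct since the true $\ell$ equals some surviving $h^{(i^{\star})}_{j^{\star}}$), and that ties broken arbitrarily in the $\sigma_i$ do not affect the block-counting in the core lemma. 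This establishes \Alg{LSMF} with the claimed bound and, via the reduction preceding \Cref{def:LSMF}, item (1) of \Cref{thm:D-direction-HS}.
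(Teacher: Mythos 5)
Your high-level framework is essentially the paper's: maintain a surviving set $S \subseteq [D]$ together with a shrinking set of unresolved points, argue that $O(1)$ queries always either delete a permutation from $S$ or shrink the unresolved set by a constant factor, and conclude $O(D+\log n)$ via the potential $c|S|+\log|U|$. The paper does exactly this (with a set $Z$ of surviving points and a candidate set $C$, and the invariant that the true boundary pair stays inside $Z$). So far so good.

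The problem is that you explicitly punt on the core lemma — ``either a point is `central' for a dominant fraction of the active blocks, or the blocks interleave so that a short batch of queries exposes a contradiction'' — and acknowledge that ``Making this dichotomy precise \dots is where the real work lies.'' That lemma \emph{is} the theorem; everything else (invariants, potential, base cases) is routine bookkeeping. Moreover, your sketched dichotomy is not obviously true and is not the route the paper takes. A point ``central'' to a \emph{dominant} fraction of blocks need not exist even when the permutations are highly structured, and ``interleaving $\Rightarrow$ cheap contradiction'' is exactly the kind of combinatorial claim that can fail: two permutations can have arbitrarily interleaved active blocks while both remain consistent with all observed labels (so no $O(1)$-query contradiction is available), and you have given no counting or pigeonhole argument that rules this out. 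The paper's actual subroutine (\Cref{lem:main-sub-EX}) sidesteps a ``central for many blocks'' requirement entirely: it first spends $4$ queries to pin a window of size $< n/3$ in \emph{one arbitrary} permutation $\sigma_k$, then classifies points as blue/red (outside some window on the low/high side) or purple (inside all windows), further refines purple into green/orange/strong-purple relative to the blue/red boundaries, and shows through an $8$-case analysis that with at most $3$ more queries one of three outcomes must occur: delete an $i\ne i^\star$, find a boundary pair for some $i$ (which is also then removed from $S$, after storing $h^{(i)}_{\sigma_i(y)}$ in a candidate set $C$), or safely discard a constant fraction of $Z$. The existence of a nonempty ``strong purple'' set — a point separating all blue from all red in \emph{every} surviving permutation simultaneously — is the structural insight that makes the $\ge n/3$ shrinkage possible, and nothing in your sketch produces it.

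One further, smaller discrepancy: your scheme never removes $i^\star$ from $S$, whereas the paper's subroutine does remove $i^\star$ when it discovers its boundary pair (Case 2 of \Cref{lem:main-sub-EX}), which is why the paper carries a candidate set $C$ and finishes with $\mathsf{Contender}$. If you insist on keeping $i^\star$ in $S$ forever, your core lemma has to guarantee progress (shrink $U$) even when the only thing the $O(1)$ queries reveal is $i^\star$'s boundary pair — that is a real constraint you would have to handle, and the paper avoids it by allowing the removal. In short: the scaffolding is right, but the combinatorial engine is missing, and the proposed replacement engine is neither proven nor aligned with the structure the paper exploits.
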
 

\begin{proof} Our goal is to, for every $i \in [D]$, either (1) determine that $i \neq i^{\star}$ by finding a ``decreasing pair" $x,y \in X$ such that $\ell(x) = 1, \ell(y) = 0$ and $\sigma_i(x) < \sigma_i(y)$ or (2) find a ``boundary pair" $x,y \in X$ such that $\ell(x) = 0$, $\ell(y) = 1$ and $\sigma_i(y) - \sigma_i(x) = 1$. If (2) holds for some $i$, then we can put the candidate hypothesis $h^{(i)}_{\sigma_i(y)}$ into a ``candidate" set $C$ and remove $i$ from consideration. Once (1) or (2) has been satisfied for every $i \in [D]$, we are guaranteed that $\ell \in C$ and we can then use the $\mathsf{Contender}$ procedure of \Clm{clm:candidates} using at most $D$ queries to recover the labeling, $\ell$.

Let $C \gets \emptyset$ denote the candidate set which is initially empty and let $S \gets [D]$ denote the current set of coordinates under consideration which is initially $[D]$. Let $Z \gets X$ denote the set of points under consideration which is initially $X$. Let $x^{\ast}, y^{\ast}$ denote the boundary points for $i^{\ast}$. That is, $\sigma_{i^{\ast}}(y^{\ast}) = \sigma_{i^{\ast}}(x^{\ast}) + 1$ and $\ell(x^{\ast}) = 0$, $\ell(y^{\ast}) = 1$. The following lemma gives the main subroutine.


\begin{lemma} \label{lem:main-sub-EX} Let $S \subseteq [D]$ such that $i^{\star} \in S$, and let $Z \subseteq X$ with $x^{\star},y^{\star} \in Z$. Let $n := |Z|$. Suppose we are given for each $i \in S$, an interval $W_i \subset [n]$ such that both (a) $\sigma_{i^{\star}}(x^{\star}),\sigma_{i^{\star}}(y^{\star}) \in W_{i^{\star}}$ and (b) $\min_{i \in S} |W_i| \leq n/3$. There is a procedure $\mathsf{RemoveOrReduce}(S,Z,(W_i \colon i \in S))$ that uses at most $3$ queries and returns one of the following:
\begin{enumerate}
    \item An index $i \in S$ such that $i \neq i^{\star}$.
    \item An index $i \in S$ and a pair $x,y \in Z$ such that $\sigma_i(y) = \sigma_i(x) + 1$ and $\ell(x) = 0$, $\ell(y) = 1$. 
    \item A set $Z' \subset Z$ of size $|Z'| \leq \frac{2n}{3}$ such that $x^{\star},y^{\star} \in Z'$. Moreover, in this case the procedure returns the labeling $\ell(Z \setminus Z')$ on all points outside of $Z'$. 
\end{enumerate}
If we remove any of the assumptions that $i^{\star} \in S$, $x^{\star},y^{\star} \in Z$, or $\sigma_{i^{\star}}(x^{\star}),\sigma_{i^{\star}}(y^{\star}) \in W_{i^{\star}}$, then the procedure still returns either $i \in S$ such that $i \neq i^{\star}$ or a set $Z' \subset Z$ of size $|Z'| \leq \frac{2n}{3}$. \end{lemma}

We defer the proof of \Lemma{lem:main-sub-EX} to \Cref{sec:main-sub-proof-EX} and proceed with the proof of \Cref{thm:LSMF}. The algorithm is defined in \Alg{LSMF}.

\begin{algorithm2e}[ht]
\caption{Deterministic Algorithm for Learning Shuffled Monotone Functions \label{alg:LSMF}} 
\DontPrintSemicolon
\LinesNumbered
\KwIn{A set $X$ of size $|X| = n$, permutations $\sigma_1,\ldots,\sigma_D \colon X \to [n]$, and query access to a labeling $\ell \colon X \to \{0,1\}$ such that $\ell \circ \sigma_{i^{\star}}^{-1}$ is non-decreasing for some $i^{\star} \in [D]$}
\KwOut{A full description of the labeling function  $\ell$}
\textbf{Initialize} $S \gets [D]$, $Z \gets X$, $C \gets \emptyset$\;
\textbf{For} every $i \in S$, \textbf{query} $\sigma^{-1}_i(1),\sigma_i^{-1}(n)$. If every such query returns the same label, then halt and assert that all points in $X$ have this label\; \label{alg:line:for}
\While{$n > 10$ and $S \neq \emptyset$ \label{alg:line:while}} {
    Choose $k \in S$ arbitrarily. Let $z_1 = \sigma_k^{-1}(1)$ and $z_4 = \sigma_k^{-1}(n)$. Choose $z_2,z_3$ such that $|\sigma_k(z_r) - \sigma_k(z_{r+1})| < n/3$ for all $r \in \{1,2,3\}$. \textbf{Query} $z_1,z_2,z_3,z_4$\; \label{alg:line:choose}
    \textbf{If} $\ell(z_1) = 1$ or $\ell(z_4) = 0$, then set $S = S \setminus \{k\}$, and continue the While-loop from \Line{alg:line:while}\; \label{alg:line:if}
    \textbf{Else} we have at least one of each label among $z_1,\ldots,z_4$. For each $i \in S$, let $a_i$ be the maximum index where a $0$-label is observed and $b_i$ the minimum index where a $1$-label is observed among $z_1,z_2,z_3,z_4$ in $\sigma_i$. If $a_i > b_i$ for some $i \in S$, then set $S = S \setminus \{i\}$ and continue the While-loop from \Line{alg:line:while}. Otherwise, set $W_i = [a_i,b_i]$ for every $i \in S$. Note that in this case $\ell \circ \sigma_i^{-1}$ is non-decreasing on $\{\sigma_i(z_1),\sigma_i(z_2),\sigma_i(z_3),\sigma_i(z_4)\}$ for every $i \in S$ and therefore $\min_{i \in S} |W_i| \leq b_k - a_k < n/3$\;
    
    Run the procedure $\mathsf{RemoveOrReduce}(S,Z,(W_i \colon i \in S))$ from \Lemma{lem:main-sub-EX}\; \label{alg:line:else}
    \textbf{If} the procedure returns $i \in S$ as in Case 1 of \Lemma{lem:main-sub-EX}, then set $S \gets S \setminus \{i\}$\; \label{alg:line:if2}
    \textbf{If} the procedure returns $i \in S$ and a pair $x,y \in Z$ as in Case 2 of \Lemma{lem:main-sub-EX}, then set $S \gets S \setminus \{i\}$ and $C \gets C \cup \{h^{(i)}_{\sigma_i(y)}\}$\; \label{alg:line:if3}
    \textbf{If} the procedure returns $Z'$ as in Case 3 of \Lemma{lem:main-sub-EX}, then set $Z \gets Z'$ and $n = |Z'|$\; \label{alg:line:if4}
}
\textbf{If} $S \neq \emptyset$, then we have $n \leq 10$. In this case, \textbf{query} the label of every point in $Z$. For each $i \in S$, if we find $x,y$ such that $\ell(x) = 0, \ell(y) = 1$ and $\sigma_i(y) = \sigma_i(x) + 1$, set $C \gets C \cup \{h^{(i)}_{\sigma_i(y)}\}$\;
Run the procedure $\mathsf{Contender}(C,X)$ from \Clm{clm:candidates} and \textbf{return} the hypothesis that it outputs\;
\end{algorithm2e}

\paragraph{Analysis.} If the labeling is constant, then clearly the algorithm is correct by \Line{alg:line:for} and uses at most $2D$ queries. Otherwise, we will assume there is at least one point with each label. In particular $x^{\star}$ is the max $0$-labeled point in $\sigma_{i^{\star}}$ and $y^{\star}$ is the min $1$-labeled point in $\sigma_{i^{\star}}$. The following is the key invariant of the algorithm that guarantees correctness.

\begin{claim} [\Alg{LSMF} invariant] \label{clm:invariant} Throughout the execution of \Alg{LSMF}, we have that if $i^{\star} \in S$, then $x^{\star},y^{\star} \in Z$, and otherwise we have $\ell = h^{i^{\star}}_{\sigma_{i^{\star}}(y^{\star})} \in C$. \end{claim}

\begin{proof} We prove the claim by induction on the size of $S$ and the size of $Z$. For the base case, when $Z = X$ and $S = [D]$, the claim clearly holds. Now, in \Line{alg:line:if}, \ref{alg:line:else}, \ref{alg:line:if2}, a coordinate $i$ is only removed from $S$ if it is known that $i \neq i^{\star}$ (the correctness of \Line{alg:line:if2} is by \Lemma{lem:main-sub-EX}). In step \Line{alg:line:if3}, when a coordinate $i$ is removed from $S$, a hypothesis \smash{${h_j}^{\!(i)}$} is placed into $C$ with the guarantee that if $i = i^{\star}$, then $\smash{{h_j}^{\!(i)}}  = \ell$ by \Lemma{lem:main-sub-EX} item (2). Thus, in all cases when the size of $S$ reduces, we maintain the invariant. When the size of $Z$ reduces in step \Line{alg:line:if4} it is with the guarantee that $x^{\star},y^{\star}$ remain in the set by \Lemma{lem:main-sub-EX} item (3). Therefore, the invariant is maintained in all cases. \end{proof}

By construction of the points $z_1,\ldots,z_4$ in \Line{alg:line:choose}, either the size of $S$ is reduced by one, or the intervals constructed in \Line{alg:line:else} satisfy $\min_{i \in S}|W_i| \leq |W_k| < n/3$. Moreover, the interval $W_{i^{\star}}$ is guaranteed to contain $\sigma_{i^{\star}}(x^{\star})$ and $\sigma_{i^{\star}}(y^{\star})$ and so the collection of intervals $(W_i \colon i \in S)$ satisfies the conditions for the subroutine \Lemma{lem:main-sub-EX}. We always make progress either by decreasing the size of $S$ by one, or by decreasing the size of $Z$ by $n/3$. Thus, the total number of queries is $O(D + \log n)$ and given \Clm{clm:invariant} and \Clm{clm:candidates}, the algorithm always returns the correct solution. \end{proof}

\subsection[Main Subroutine: Proof of Lemma \ref*{lem:main-sub-EX}]{Main Subroutine: Proof of \Lemma{lem:main-sub-EX}} \label{sec:main-sub-proof-EX}

\begin{proof} We begin with some notation. Let $t := |\{x \in Z \colon \ell(x) = 1\}|$ denote the Hamming weight of the labeling. We are given an interval $W_i$ for each $i \in S$ with the guarantee that $\min_{i \in S}|W_i| < n/3$ and $\sigma_{i^{\star}}(x^{\star}),\sigma_{i^{\star}}(y^{\star}) \in W_{i^{\star}}$. Note this implies $t \in W_{i^{\star}}$. We will use the notation ``$x < W_i$" or ``$x > W_i$" to mean that $x$ lies entirely below or above the interval $W_i$. 

Since $t \in W_{i^{\star}}$, we know that $\ell(x) = 0$ whenever $\sigma_{i^{\star}}(x) < W_{i^{\star}}$ and similarly $\ell(x) = 1$ whenever $\sigma_{i^{\ast}}(x) > W_{i^{\star}}$. Thus, if we ever discover some $i$ and $x$ where this does not hold, then we know $i \neq i^{\star}$ and we can safely return $i$ which satisfies the guarantee specified in item (1) of the lemma. With this in mind, we define the following subsets of $Z$:
\begin{itemize}
    \item Blue points: $B = \{x \in Z \colon \exists i \in S\text{, } \sigma_i(x) < W_i\}$.
    \item Red points: $R = \{x \in Z \colon \exists i \in S \text{, } \sigma_i(x) > W_i\}$.
    \item Purple points: $P = \{x \in Z \colon \sigma_i(x) \in W_i \text{, } \forall i \in S\}$.
\end{itemize}
The utility of blue and red points is given by the following claim.

\begin{claim} [Blue/red points] \label{clm:red-blue} (a) Suppose we query $x \in B$ and we observe that $\ell(x) = 1$. Then we can find a coordinate $i \in S$ satisfying item (1) of the lemma. (b) Similarly, suppose we query $x \in R$ and we observe that $\ell(x) = 0$. Then we can find a coordinate $i \in S$ satisfying item (1) of the lemma. \end{claim}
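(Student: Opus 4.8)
The plan is to read off both parts directly from the structural facts already established at the start of the lemma's proof, namely that since $\ell\circ\sigma_{i^\star}^{-1}$ is non-decreasing and $\sigma_{i^\star}(x^\star),\sigma_{i^\star}(y^\star)\in W_{i^\star}$ — these two indices are consecutive in $\sigma_{i^\star}$-order and straddle the $0\to 1$ transition of $\ell$, so that transition happens inside $W_{i^\star}$ — we have $\ell(x)=0$ for every $x$ with $\sigma_{i^\star}(x)<W_{i^\star}$ and $\ell(x)=1$ for every $x$ with $\sigma_{i^\star}(x)>W_{i^\star}$. Everything then reduces to contrapositive reasoning about the witness coordinate that placed $x$ into $B$ or $R$.

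For part (a): given the point $x\in B$ from the hypothesis, scan the coordinates in $S$ — using no further label queries — to produce a witness $i\in S$ with $\sigma_i(x)<W_i$; such an $i$ exists by the definition of $B$. If the (already-made) query to $\ell(x)$ returned $1$, then $i\neq i^\star$: were $i=i^\star$, we would have $\sigma_{i^\star}(x)<W_{i^\star}$, forcing $\ell(x)=0$ by the fact above and contradicting $\ell(x)=1$. Hence returning this $i$ satisfies item (1) of \Cref{lem:main-sub-EX}. Part (b) is the mirror image: for $x\in R$ extract a witness $i\in S$ with $\sigma_i(x)>W_i$; if $\ell(x)=0$, then $i=i^\star$ would force $\ell(x)=1$, a contradiction, so $i\neq i^\star$ and we return it.

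I do not anticipate a genuine difficulty: the claim is essentially an unwinding of the definitions of $B$ and $R$ against the monotone structure of $\ell\circ\sigma_{i^\star}^{-1}$. The only points worth a moment of care are (i) that the witness coordinate is recovered at zero query cost, so the claim does not inflate the $3$-query budget of the lemma; (ii) that the argument works for an \emph{arbitrary} witness $i$ — one never has to search for a specific coordinate, since membership of $x$ in $B$ (resp.\ $R$) via any $i$ already yields the contradiction once the observed label is "wrong side" for a point below (resp.\ above) $W_i$; and (iii) that the contradiction genuinely invokes the hypothesis $\sigma_{i^\star}(x^\star),\sigma_{i^\star}(y^\star)\in W_{i^\star}$, without which a point below $W_{i^\star}$ need not be $0$-labeled.
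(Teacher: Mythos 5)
Your proof is correct and follows the same route as the paper's: the witness coordinate $i$ with $\sigma_i(x)<W_i$ (resp.\ $\sigma_i(x)>W_i$) is extracted from the definition of $B$ (resp.\ $R$) at zero additional query cost, and the monotonicity of $\ell\circ\sigma_{i^\star}^{-1}$ together with $\sigma_{i^\star}(x^\star),\sigma_{i^\star}(y^\star)\in W_{i^\star}$ rules out $i=i^\star$ by contrapositive. You spell out the contrapositive step a bit more explicitly than the paper does (which just asserts ``we are guaranteed that $i\neq i^\star$''), but the underlying argument is identical.
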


\begin{proof} Suppose $x \in B$ and we observe $\ell(x) = 1$. By definition of $B$, there exists $i \in S$ such that $\sigma_i(x) < W_i$. We are guaranteed that $i \neq i^{\ast}$, so that item (1) of the lemma is satisfied. Similarly, suppose $x \in R$ and we observe that $\ell(x) = 0$. Then by definition of $R$, there exists $i \in S$ such that $\sigma_i(x) > W_i$. We are guaranteed that $i \neq i^{\ast}$, so that item (1) of the lemma is satisfied. \end{proof}

We now continue with the proof of \Lemma{lem:main-sub-EX}.

\paragraph{Case 1:} Suppose there exists $x \in B \cap R$. Then we query this point to obtain $\ell(x)$. If $\ell(x) = 1$, then we satisfy item (a) of \Clm{clm:red-blue}. Otherwise, we satisfy item (b) of \Clm{clm:red-blue}. Either way, we are guaranteed to find some $i \neq i^{\ast}$ and so we satisfy item (1) of the lemma. This uses $1$ query.

\begin{figure}
    \hspace*{3cm}
    \includegraphics[scale=0.3]{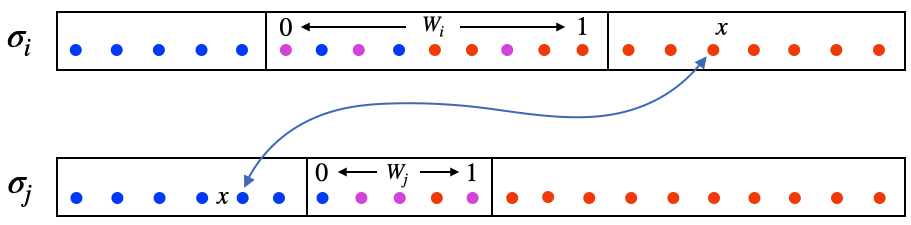}
    \caption{\small{Accompanying illustration for case 1 in the proof of \Lemma{lem:main-sub-EX}.} Here $x > W_i$ and $x < W_j$ and so by querying $\ell(x)$ we rule out one of $\sigma_i,\sigma_j$ as the monotone permutation. Thus, after case 1 we may assume $B \cap R = \emptyset$ and so the red, blue, purple coloring scheme is well-defined.}
    \label{fig:red-blue}
\end{figure}


(Refer to \Fig{red-blue}.) Thus, we may now assume that $B \cap R = \emptyset$, and in particular we have $Z = B \sqcup P \sqcup R$.  We now define the blue and red ``boundary points" of $\sigma_i$ as 
\[
\partial_{i,B} = \argmax_{x \in B} \sigma_i(x) \text{ and } \partial_{i,R} = \argmin_{x \in R} \sigma_i(x) \text{.}
\]

\paragraph{Case 2:} Suppose that for some $i \in S$ we have $\sigma_i(\partial_{i,R}) < \sigma_i(\partial_{i,B})$. I.e. there is a red point appearing before a blue point in $\sigma_i$. Then, we query $x:= \partial_{i,B}$ and $y:=\partial_{i,R}$. If $\ell(x) = 1$, then we satisfy item (a) of \Clm{clm:red-blue}. If $\ell(y) = 0$, then we satisfy item (b) of \Clm{clm:red-blue}. In either case we are guaranteed to find some $i' \neq i^{\ast}$ and so we satisfy item (1) of the lemma. Otherwise, we have $\ell(x) = 0$ and $\ell(y) = 1$ and since $\sigma_i(y) < \sigma_i(x)$ we know that $i \neq i^{\ast}$ and so we can simply return $i$ satisfying item (1) of the lemma. This case uses $2$ queries. (Refer to \Fig{red-blue-cross}.)

\begin{figure}
    \hspace*{3cm}
    \includegraphics[scale=0.3]{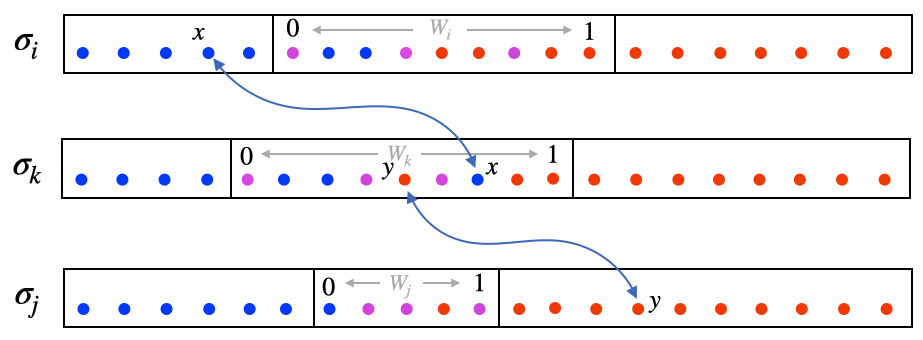}
    \caption{\small{Accompanying illustration for case 2 in the proof of \Lemma{lem:main-sub-EX}. Here, a red point $y$ appears before a blue point $x$ in $\sigma_k$. By definition of the colors, we have $x < W_i$ and $y > W_j$ for some $i,j$. By querying the labels of $x,y$ we rule out one of $\sigma_i,\sigma_j,\sigma_k$ as the monotone permutation. The picture for case 3 is analogous, except that $x$ appears just before $y$ in $\sigma_k$. Here, when $\ell(x) = 0, \ell(y) = 1$ we instead have found a ``boundary pair" for $\sigma_k$, satisfying item (2) of \Lemma{lem:main-sub-EX}.}}
    \label{fig:red-blue-cross}
\end{figure}


Thus, we may now assume that $\sigma_i(\partial_{i,B}) < \sigma_i(\partial_{i,R})$ for every $i \in S$. I.e. in every $\sigma_i$ all blue points appear before all red points.

\paragraph{Case 3:} Suppose that $P = \emptyset$. In this case we pick an arbitrary $i \in S$ and observe that $\sigma_i(\partial_{i,R}) = \sigma_i(\partial_{i,B}) + 1$. Thus, we query $x:=\partial_{i,B}$ and $y:=\partial_{i,R}$. If $\ell(x) = 1$ or $\ell(y) = 0$, then again we satisfy the premise of \Clm{clm:red-blue} and can return some $i' \in S$ satisfying item (1) of the lemma. Otherwise, we have $\ell(x) = 0$ and $\ell(y) = 1$ and thus we can return $i$ and $x,y$ which together satisfy item (2) of the lemma.  This case uses $2$ queries. 

Thus, we now assume that $P \neq \emptyset$. We now further partition $P$ into three sets as follows:
\begin{itemize}
    \item Green points: $G = \{x \in P \colon \exists i \in S\text{, } \sigma_i(x) < \sigma_i(\partial_{i,B})\}$.
    \item Orange points: $O = \{x \in P \colon \exists i \in S\text{, } \sigma_i(x) > \sigma_i(\partial_{i,R})\}$.
    \item Strong purple points: $P' = \{x \in P \colon \sigma_i(\partial_{i,B}) < \sigma_i(x) < \sigma_i(\partial_{i,R})\text{, } \forall i \in S\}$.
\end{itemize}
\begin{figure}
    \hspace*{3cm}
    \includegraphics[scale=0.3]{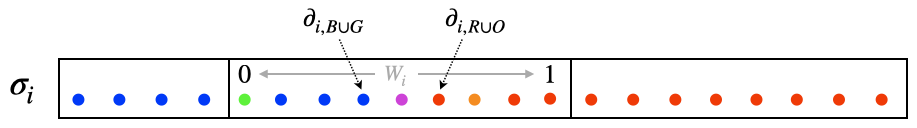}
    \caption{\small{Illustration for the blue, red, orange, green, purple coloring scheme for a single permutation. The left-most point in the interval $W_i$ was previously purple, and then re-colored green as it appears before a blue point. Similarly, the orange point was purple, but appears after a red point.}}
    \label{fig:orange-green}
\end{figure}

In words, green points are the ``non-blue" points which come before the blue boundary in \emph{some permutation} and orange points are the ``non-red" points that come after the red boundary in \emph{some permutation}. The remaining ``strong" purple points lie strictly between the blue and red boundaries in \emph{every permutation}. The utility of green and orange points is given by the following claim. 

\begin{claim} [Green/orange points] \label{clm:green-orange} (a) Suppose we query $x \in G$ and we observe that $\ell(x) = 1$. Then we can make at most one more query and find a coordinate $i \in S$ satisfying item (1) of the lemma. (b) Similarly, suppose we query $x \in O$ and we observe that $\ell(x) = 0$. Then we can make one more query and find a coordinate $i \in S$ satisfying item (1) of the lemma. \end{claim}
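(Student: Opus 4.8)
The two parts are mirror images, so the plan is to handle (a) in detail and obtain (b) by symmetry. For (a), suppose we have already queried $x \in G$ and seen $\ell(x) = 1$. By definition of $G$ there is a coordinate $i \in S$ with $\sigma_i(x) < \sigma_i(\partial_{i,B})$; in particular $B \neq \emptyset$ so $\partial_{i,B}$ is well-defined. The one extra query we make is $\ell(\partial_{i,B})$, and I would then split into two cases according to the answer.

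If $\ell(\partial_{i,B}) = 1$: then $\partial_{i,B} \in B$ is a blue point observed to have label $1$, so \Cref{clm:red-blue}(a) applies and (with no further queries) produces a coordinate $i' \in S$ with $i' \neq i^{\star}$, which satisfies item (1) of \Cref{lem:main-sub-EX}. If instead $\ell(\partial_{i,B}) = 0$: then $(x,\partial_{i,B})$ is a decreasing pair in $\sigma_i$ — we have $\ell(x)=1$, $\ell(\partial_{i,B})=0$, and $\sigma_i(x) < \sigma_i(\partial_{i,B})$ — so $\ell \circ \sigma_i^{-1}$ is not monotone, hence $i \neq i^{\star}$, and we return $i$. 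Either way we have used at most one additional query and identified a coordinate $\neq i^{\star}$, proving (a).

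For (b), given $x \in O$ with $\ell(x) = 0$, I would pick $i \in S$ witnessing $\sigma_i(x) > \sigma_i(\partial_{i,R})$ (so $R \neq \emptyset$) and query $\ell(\partial_{i,R})$: if $\ell(\partial_{i,R}) = 0$ then \Cref{clm:red-blue}(b) applied to the red point $\partial_{i,R}$ yields some $i' \neq i^{\star}$; if $\ell(\partial_{i,R}) = 1$ then $(\partial_{i,R}, x)$ is a decreasing pair in $\sigma_i$ (since $\sigma_i(\partial_{i,R}) < \sigma_i(x)$, $\ell(\partial_{i,R})=1$, $\ell(x)=0$), so $i \neq i^{\star}$. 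There is no real obstacle here; the argument is a two-line case analysis. The only points worth noting carefully are that the witnessing coordinate $i$ is exactly the one along which the decreasing pair appears (so no union over witnesses is needed), that $\partial_{i,B}$ and $\partial_{i,R}$ exist automatically whenever $G$ resp. $O$ is nonempty, and that invoking \Cref{clm:red-blue} costs no further queries since it only reinterprets the label already seen on a blue/red point — this is what keeps the total at a single extra query.
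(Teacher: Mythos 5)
Your proof is correct and follows essentially the same route as the paper's: query the blue (resp.\ red) boundary point $\partial_{i,B}$ (resp.\ $\partial_{i,R}$) witnessing membership in $G$ (resp.\ $O$), then either that queried point's label lets you invoke \Cref{clm:red-blue} directly, or it forms a decreasing pair with $x$ in $\sigma_i$ and rules out $i = i^{\star}$. The paper phrases the second query as ``some $x \in B$ with $\sigma_i(z) < \sigma_i(x)$'' rather than naming $\partial_{i,B}$ explicitly, but that is the same witness and the case analysis is identical.
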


\begin{proof} Suppose $z \in G$ and we observe that $\ell(z) = 1$. By definition of $G$, there is some $x \in B$ such that $\sigma_i(z) < \sigma_i(x)$. Thus, we query this $x$ and obtain $\ell(x)$. If $\ell(x) = 0$, then we know $i \neq i^{\ast}$ and so we can return $i$ which satisfies item (1) of the lemma. Otherwise, if $\ell(x) = 1$, then by case (a) of \Clm{clm:red-blue} we can again find an $i' \in S$ satisfying item (1) of the lemma.

Suppose $z \in O$ and we observe that $\ell(z) = 0$. By definition of $O$, there is some $x \in R$ such that $\sigma_i(x) < \sigma_i(z)$. Thus, we query this $x$ and obtain $\ell(x)$. If $\ell(x) = 1$, then we know $i \neq i^{\ast}$ and so we can return $i$ which satisfies item (1) of the lemma. Otherwise, if $\ell(x) = 0$, then by case (b) of \Clm{clm:red-blue} we can again find an $i' \in S$ satisfying item (1) of the lemma. \end{proof}

We now continue with the proof of \Lemma{lem:main-sub-EX}.

\paragraph{Case 4:} Suppose that there exists $z \in G \cap O$. Then we query $z$ and obtain $\ell(z)$. If $\ell(z) = 1$, then we satisfy item (a) of \Clm{clm:green-orange}. Otherwise, if $\ell(z) = 0$, then we satisfy item (b) of \Clm{clm:green-orange}. In either case we find some $i \in S$ satisfying item (1) of the lemma. This case uses at most $2$ queries.

Thus, we now assume that $G \cap O = \emptyset$, and in particular we have $P = G \sqcup P'\sqcup O$.  We now define the blue-green and red-orange ``boundary points" of $\sigma_i$ as 
\[
\partial_{i,B \cup G} = \argmax_{x \in B \cup G} \sigma_i(x) \text{ and } \partial_{i,R \cup O} = \argmin_{x \in R \cup O} \sigma_i(x) \text{.}
\]

\paragraph{Case 5:} Suppose that for some $i \in S$ we have $\sigma_i(\partial_{i,R \cup O}) < \sigma_i(\partial_{i,B \cup G})$. Query $x := \partial_{i,B \cup G}$ and $y:=\partial_{i,R \cup O}$. If $\ell(x) = 1$, then we find $i' \in S$ satisfying item (1) of the lemma by item (a) of either \Clm{clm:red-blue} or \Clm{clm:green-orange}, depending on whether $x$ is blue or green. Similarly, if $\ell(y) = 0$, then we find $i' \in S$ satisfying item (1) of the lemma by item (b) of either \Clm{clm:red-blue} or \Clm{clm:green-orange}, depending on whether $y$ is red or orange. Otherwise, if $\ell(x) = 0$ and $\ell(y) = 1$, then we know $i \neq i^{\ast}$ and so we again satisfy item (1). This cases uses at most $3$ queries. (Refer to \Fig{orange-green-cross}.)

\begin{figure}
    \hspace*{3cm}
    \includegraphics[scale=0.3]{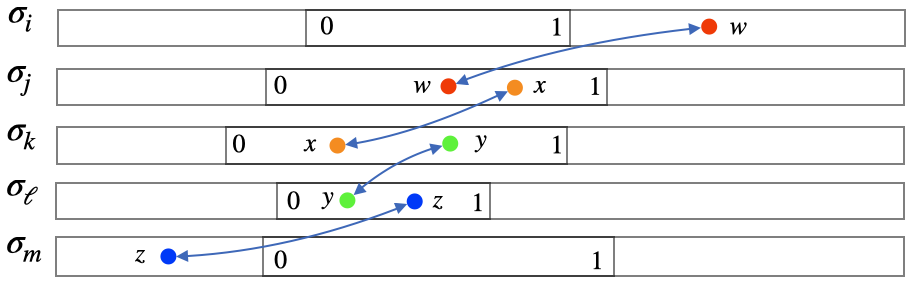}
    \caption{\small{Accompanying example illustration for case 5 in the proof of \Lemma{lem:main-sub-EX} in which an orange point $x$ appears before a green point $y$ in some $\sigma_k$. Since $y$ is green, $y$ appears before a blue point $z$ in some $\sigma_{\ell}$, which in turn means that $z < W_m$ for some $m \in [D]$. Similarly, since $x$ is orange, it appears after a red point $w$ in some $\sigma_{j}$, which means that $w > W_i$ for some $i \in [D]$. Querying these points reveals that one of $\sigma_i,\sigma_j,\sigma_k,\sigma_{\ell},\sigma_m$} does not have monotone labels. An accompanying illustration for case 6 is exactly the same, except that $y$ appears just before $x$ in $\sigma_k$.}
    \label{fig:orange-green-cross}
\end{figure}

Thus, we may now assume that $\sigma_i(\partial_{i,B \cup G}) < \sigma_i(\partial_{i,R \cup O})$. I.e. in every $\sigma_i$ all blue and green points appear before all red and orange points.

\paragraph{Case 6:} Suppose that $P' = \emptyset$. This case is similar to case 3. We pick an arbitrary $i \in S$ and observe that $\sigma_i(\partial_{i,R \cup O}) = \sigma_i(\partial_{i,B \cup G}) + 1$. Thus, we query $x := \partial_{i,B \cup G}$ and $y := \partial_{i,R \cup O}$ to obtain $\ell(x)$ and $\ell(y)$. If $\ell(x) = 1$ or $\ell(y) = 0$, then we again satisfy the premise of item (a) or (b) of either \Clm{clm:red-blue} or \Clm{clm:red-blue} and we can return some $i' \in S$ satisfying item (1) of the lemma. Otherwise, we have $\ell(x) = 0$ and $\ell(y) = 1$ and thus we can return $i$ and $x,y$ which together satisfy item (2) of the lemma. This case uses at most $3$ queries. 

Thus, we may now assume that $P' \neq \emptyset$. Recall that by definition of such points (``strong purple points"), any point $x \in P'$ separates all blue points from all red points in \emph{every permutation}. We will use this very strong property to satisfy item (3) of the lemma statement. Observe that in light of case 1, we know that $Z = B \sqcup P \sqcup R$ and so $|B| + |P| + |R| = n$. Moreover, by definition of $P$ and the bound $\min_{i \in S} |W_i| < n/3$ given by assumption in the lemma statement, we have $|P| \leq \min_{i \in S} |W_i| < n/3$. Thus, we are guaranteed that either $|B| \geq n/3$ or $|R| \geq n/3$.

\paragraph{Case 7:} Suppose $|B| \geq n/3$. In this case we wish to ``throw away" $B$ and let $Z' = P \cup R$. Recall the definition of $P'$ and the fact that $P' \neq \emptyset$ (in light of case 6). Pick an arbitrary $i \in S$ and let $z = \argmin_{z \in P'} \sigma_i(z)$ denote the left-most strong purple point for $\sigma_i$. Let $x$ be its left-neighbor, i.e. $\sigma_i(x) = \sigma_i(z)-1$. Observe that $x \in B \cup G$. We query $z$ and $x$ to obtain $\ell(z)$ and $\ell(x)$. If $\ell(x) = 1$, then we can satisfy item (1) of the lemma by either \Clm{clm:red-blue} or \Clm{clm:green-orange} depending on whether $x$ is blue or green. Thus, let's assume $\ell(x) = 0$. Then, if $\ell(z) = 1$, we can return $i$ with $x,z$ which together satisfy item (2) of the lemma. Otherwise, we have $\ell(z) = 0$. Now, since $z \in P'$, in every $\sigma_i$, all blue points appear before $z$. Moreover, in $\sigma_{i^{\ast}}$ we are guaranteed that the boundary points $x^{\ast}$ and $y^{\ast}$ appear at, or after, $z$. Putting these observations together, we know that $x^{\ast}, y^{\ast} \in P \cup R$. Thus, we return $Z' = P \cup R$ satisfying item (3) of the lemma statement. This case uses at most $3$ queries.

\begin{figure}
    \hspace*{3cm}
    \includegraphics[scale=0.3]{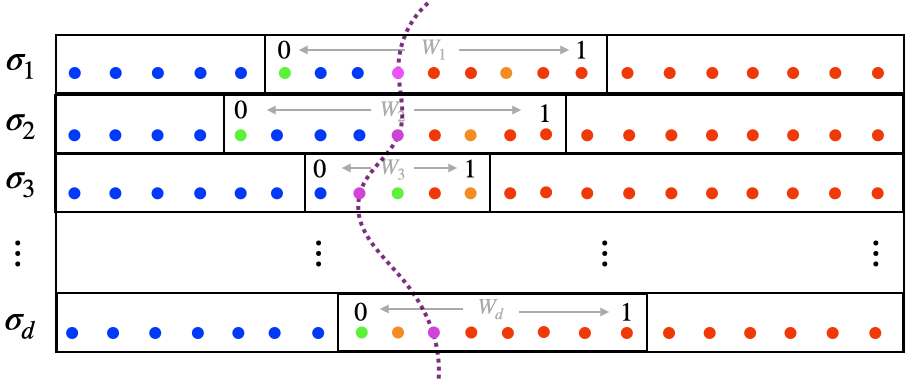}
    \caption{\small{Accompanying illustration for cases 7 and 8 in the proof of \Lemma{lem:main-sub-EX}. Here, we have found a single purple point which separates all blue points from all red points in every permutation simultaneously.}}
    \label{fig:purple-sep}
\end{figure}

\paragraph{Case 8:} Suppose $|R| \geq n/3$. In this case we wish to ``throw away" $R$ and let $Z' = P \cup B$. Recall the definition of $P'$ and the fact that $P' \neq \emptyset$ (in light of case 6). Pick an arbitrary $i \in S$ and let $z = \argmax_{z \in P'} \sigma_i(z)$ denote the right-most strong purple point for $\sigma_i$. Let $y$ be its right-neighbor, i.e. $\sigma_i(y) = \sigma_i(z)+1$. Observe that $y \in R \cup O$. We query $z$ and $x$. If $\ell(y) = 0$, then we can satisfy item (1) of the lemma by either \Clm{clm:red-blue} or \Clm{clm:green-orange} depending on whether $y$ is red or orange. Thus, let's assume $\ell(y) = 1$. Then, if $\ell(z) = 0$, we can return $i$ with $x,z$ which together satisfy item (2) of the lemma. Otherwise, we have $\ell(z) = 1$. Now, since $z \in P'$, in every $\sigma_i$, all red points appear after $z$. Moreover, in $\sigma_{i^{\ast}}$ we are guaranteed that the boundary points $x^{\ast}$ and $y^{\ast}$ appear at, or before, $z$. Putting these observations together, we know that $x^{\ast}, y^{\ast} \in P \cup B$. Thus, we return $Z' = P \cup B$ satisfying item (3) of the lemma statement. This case uses at most $3$ queries. (Refer to \Fig{purple-sep}.)

This completes the proof. \end{proof}

\bibliography{biblio}

\newpage
\appendix
\section{Approximate Learning} \label{sec:approximate}

In this section we consider $\eps$-learning a halfspace over an arbitrary set $X \subset \RR^d$ of $n$ points whose normal vector comes from a known set $V = \{u_1,\ldots,u_D\} \subset S^{d-1}$ of $D$ unit vectors. We prove items (3) and (2) of \Cref{thm:D-direction-HS} in \Cref{sec:agnostic} and \Cref{sec:realizable}, respectively. 

Our algorithms in this section crucially invoke our algorithm for exactly learning shuffled monotone functions obtained in \Cref{sec:LSMF}. This algorithm is combined with a randomized binary search procedure which gives a tolerant learner in the $1$-dimensional setting, along with a basic technique from property testing of monotone functions, and a careful probabilistic analysis.

Throughout the section, let $\cM$ denote the class of monotone Boolean functions over $[n]$, and for $f \colon [n] \to \{0,1\}$, let $d(f,\cM) = \min_{g \in \cM} \norm{f - g}_1$ denote the minimum Hamming distance of $f$ to any monotone function. Note that here $\norm{f - h}_1 = |\{j \in [n] \colon f(j) \neq h(j)\}| \cdot n^{-1}$. We use the notation $h_S$ to denote the restriction of $h$ to a subset $S$ of its domain.

\subsection{Tolerant Learning} \label{sec:agnostic}

The following lemma gives an optimal $c$-tolerant algorithm for learning monotone functions in $1$-dimension, for sufficiently small constant $c$. We use this algorithm as a subroutine for our $\eps$-learners in both the tolerant and realizable settings. 

\begin{lemma} [Randomized Binary Search] \label{lem:mono-agnostic-learn}  Let $f \colon [n] \to \{0,1\}$ be a function with $d(f,\cM) \leq \eps$. Then, there is a randomized algorithm $\cA$ using $O(\log (1/\eps)+\log(1/\delta))$ queries to $f$ that returns a monotone function $g \colon [n] \to \{0,1\}$ such that $\norm{f - g}_1 \leq 10\eps/\delta$ with probability at least $1-\delta$. \end{lemma}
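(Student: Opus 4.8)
The plan is to reduce the lemma to finding a single good threshold and then to run a randomized binary search with random pivots. Since the domain $[n]$ is totally ordered, the monotone class $\cM$ consists exactly of the threshold functions $g_\theta(j)=\mathbf{1}(j\ge\theta)$ for $\theta\in\{1,\dots,n+1\}$, so $d(f,\cM)=\min_\theta\norm{f-g_\theta}_1$; fix an optimal $\theta^\star$ and let $E=\{j:f(j)\neq g_{\theta^\star}(j)\}$ be the set of ``corrupted'' points, so $|E|\le\eps n$. The crucial elementary property is that $\theta\mapsto\norm{f-g_\theta}_1$ changes by at most $1/n$ when $\theta$ moves by one step, i.e.\ it is $\tfrac1n$-Lipschitz in $\theta$; hence any threshold within distance $L$ of $\theta^\star$ has error at most $\eps+L/n$.

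The algorithm maintains a contiguous interval $I=[a,b]\subseteq[n]$, initially $[1,n]$, and in each round samples a pivot $m$ uniformly at random from the middle third of $I$, queries $f(m)$, and replaces $I$ by $[a,m]$ if $f(m)=1$ and by $[m,b]$ if $f(m)=0$; it stops once $|I|\le c\eps n$ (for a small constant $c$) and outputs $g_a$. Two observations govern the analysis: (i) if the pivot lies outside $E$ then $f(m)=g_{\theta^\star}(m)$, so the recursion keeps $\theta^\star$ inside $I$ whenever $\theta^\star$ was inside it; and (ii) since $|E|\le\eps n$ while the middle third of $I$ has $\approx|I|/3$ points, the pivot lands in $E$ with probability at most $3\eps n/|I|$. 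Each round shrinks $|I|$ by a factor in $[1/3,2/3]$, so there are $O(\log(1/\eps))$ rounds and $O(\log(1/\eps))$ queries; near-constant $f$ and the degenerate regime $c\eps n<1$ are handled separately with $O(1)$ extra queries, and the $O(\log(1/\delta))$ term in the stated bound is ample slack to absorb such bookkeeping or minor implementation variants.

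For correctness the key point is that errors do not compound: if the first round in which the recursion is ``wrong'' (i.e.\ discards $\theta^\star$) has $|I|=L$, then every later interval is a sub-interval of that one, which still contained $\theta^\star$, so the output threshold $a$ satisfies $|a-\theta^\star|\le L$ and therefore $\norm{f-g_a}_1\le\eps+L/n$ by Lipschitzness; and if no round is ever wrong then the final interval contains $\theta^\star$ and has size $\le c\eps n$, giving error $\le(1+c)\eps$. Consequently $\norm{f-g_a}_1>\eps+L_0/n$ forces some round with $|I|>L_0$ to have erred, and by (ii) and a union bound this has probability at most $\sum_{\text{rounds with }|I|>L_0}3\eps n/|I|=O(\eps n/L_0)$, the sum being geometric because the $|I|$'s decrease by a constant factor. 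Taking $L_0=\Theta(\eps n/\delta)$ makes this probability at most $\delta$, and then with probability $\ge1-\delta$ the output has error at most $\eps+O(\eps/\delta)\le10\eps/\delta$, using $O(\log(1/\eps)+\log(1/\delta))$ queries.

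I expect the main obstacle to be precisely the non-compounding argument of the third paragraph combined with observation (ii): one has to see that a single unlucky pivot can displace the returned threshold only by the length of the current interval, and then extract the $10\eps/\delta$ guarantee from the per-scale failure probabilities via the geometric union bound — a cruder argument bounding $\EX[\norm{f-g_a}_1]$ directly and applying Markov would lose an extra $\log(1/\eps)$ factor. Everything else (choosing the constants, the near-constant and $\eps\lesssim1/n$ cases, and the precise accounting of the $O(\log(1/\delta))$ query term) is routine.
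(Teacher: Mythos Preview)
Your proposal is correct and essentially matches the paper's proof: both run a randomized binary search with the pivot drawn uniformly from the middle third of the current interval, and both control the failure probability via the same geometric union bound $\sum_t O(\eps n/|I_t|)=O(\eps n/L_0)$. The only cosmetic difference is that the paper sets its stopping threshold at $C\eps n$ with $C=9/\delta$ and simply demands that \emph{every} pivot miss the corrupted set, whereas you stop at $\Theta(\eps n)$ and use your Lipschitz/non-compounding observation to render late bad pivots harmless --- the two analyses produce the same $10\eps/\delta$ bound, and in both cases the $\log(1/\delta)$ term in the query count is slack.
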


We defer the proof of \Lemma{lem:mono-agnostic-learn} to \Cref{sec:mono-agnostic-learn}. Recall that $\cH(V,X)$ denotes the set of all (non-homogeneous) halfspaces defined over $X$ with normal vector in $V$. 

\begin{theorem} \label{thm:agnostic-formal} Let $X \subset \RR^d$ be any set of $n$ points and $V \subset S^{d-1}$ an arbitrary set of $D$ unit vectors. For any $\eps,\delta > 0$, suppose that $f : X \to \{0,1\}$ satisfies $\norm{f-g}_{1} \leq \frac{\eps\delta}{400}$ for some $g \in \cH(V,X)$. There is a randomized algorithm (\Alg{agnostic-d}) using $O(\ln (D/\delta) \cdot \min(D + \log(\frac{1}{\eps\delta}),\frac{1}{\eps\delta}))$ queries to $f$ and returns a halfspace $h \in \cH(V,X)$ such that $\norm{f-h}_1 \leq \eps$ with probability $1-2\delta$. \end{theorem}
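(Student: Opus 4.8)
The plan is to (i) pass to a random subsample $S\subseteq X$ of size $N=\Theta(1/(\eps\delta))$, on which the corruption turns the instance into an approximate shuffled-monotone instance, (ii) run a robustified version of the parallel search of \Cref{thm:LSMF} on $S$, using \Cref{lem:mono-agnostic-learn} as the one-dimensional primitive, and (iii) transfer the resulting hypothesis back to $X$. Throughout, write $\sigma'_i:=\sigma_i|_S$ for the induced orderings and $\ell:=f|_S$.

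For step (i): since $\norm{f-g}_1\le \eps\delta/400$, Markov's inequality over the draw of $S$ gives $\norm{\ell-g|_S}_1\le \eps/C$ with probability $\ge 1-\delta/10$ for a suitable constant $C$ (the $1/\delta$ loss in Markov is paid for by the $1/400$ in the hypothesis). As $g|_S$ is a threshold under $\sigma'_{i^\star}$, this says $d(\ell\circ(\sigma'_{i^\star})^{-1},\cM)\le \eps/C$, i.e.\ $(\ell,\{\sigma'_i\})$ is an $(\eps/C)$-noisy instance of \Cref{def:LSMF}. A Chernoff bound over the $D$ directions — the ``monotonicity property testing'' flavored ingredient — additionally gives, with probability $\ge 1-\delta/10$, that for every $i$ the quantity $d(\ell\circ(\sigma'_i)^{-1},\cM)$ is within $\eps/C$ of $d(f\circ\sigma_i^{-1},\cM)$. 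Condition on both events.

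For step (ii): I would run \Alg{LSMF} on $(S,\{\sigma'_i\})$ but make every decision tolerant. Each choice that the exact algorithm makes from a single query — whether to discard a direction, which half of $Z$ to keep, what label to commit to a discarded block — is instead made by querying a batch of $\Theta(\ln(D/\delta))$ points from the relevant block and aggregating. One maintains two invariants: a direction is discarded only when its batch certifies $d(\ell\circ(\sigma'_i)^{-1},\cM)>2\eps/C$, so $i^\star$ survives all $O(\min(D+\log N,N))$ steps by a union bound; and the labels committed on discarded blocks, together with the surviving-direction bookkeeping, are corrupted by only an $O(\eps/C)$ fraction overall, since on a surviving direction $\ell$ is $O(\eps/C)$-close to monotone and each committed block lies entirely to one side of the current boundary band. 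When the search reaches a single surviving direction $\hat i$ (or $|Z|=O(1)$), I would feed $f\circ\sigma_{\hat i}^{-1}$ \emph{over all of $X$} — exploiting that the query cost of \Cref{lem:mono-agnostic-learn} is independent of the domain size — to the randomized binary search with failure parameter $\Theta(\delta)$. If $\hat i=i^\star$ the precondition there is the \emph{given} bound $d(f\circ\sigma_{i^\star}^{-1},\cM)\le \eps_0=\eps\delta/400$, so it returns a halfspace within $10\eps_0/\Theta(\delta)\le\eps/10$ of $f$; in general one finishes with a short hypothesis-selection phase among the (at most $D$) surviving directions' candidates, each obtained by \Cref{lem:mono-agnostic-learn} over $X$, picking one of error $\le\eps$. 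The total budget is $O\big(\ln(D/\delta)\cdot\min(D+\log N,N)\big)=O\big(\ln(D/\delta)\cdot\min(D+\log\tfrac1{\eps\delta},\tfrac1{\eps\delta})\big)$; the ``$\min$ with $N$'' branch corresponds to simply querying all of $S$ once $D>N$ and reading off the best direction with no further queries. Checking that the selection phase (cost $O(\log(D/\delta)/\eps)$) fits inside this budget in each regime is the main source of the promised ``careful probabilistic analysis,'' and a union bound over the $O(1)$ failure events gives success probability $\ge1-2\delta$.

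The hard part, and where I expect to spend the most effort, is not any single ingredient but the robustification of the parallel search itself. In \Alg{LSMF} a single ``decreasing pair'' is conclusive proof that $i\ne i^\star$, whereas under corruption such a pair can occur for $i^\star$ as well, so the discard rule must rest on a statistically significant amount of violating evidence \emph{localized to the current block}, and one must verify that this can always be gathered with $O(\ln(D/\delta))$ queries per step without ever discarding $i^\star$, while simultaneously bounding the corruption that leaks into the committed labels as $Z$ shrinks. A secondary obstacle is making the final output generalize from $S$ to $X$ within budget: this is precisely why the last step must invoke \Cref{lem:mono-agnostic-learn} over $X$ rather than over $S$ (so that accuracy, not merely sample size, is what is controlled), and why the number of candidate directions entering the selection phase must be kept small.
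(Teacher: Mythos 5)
Your high-level skeleton (subsample of size $\Theta(1/(\eps\delta))$, exploit \Cref{thm:LSMF}, finish with \Cref{lem:mono-agnostic-learn} over all of $X$) is in the right direction, and your accounting of why \Cref{lem:mono-agnostic-learn} must be invoked over $X$ rather than $S$ is correct. But the middle of your plan — step~(ii), which you yourself flag as the hard part — takes a genuinely different route from the paper, and as specified it has a real gap.

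You propose running a \emph{single} pass of a robustified parallel search on one subsample $S$, replacing each query-decision of \Alg{LSMF} with a batch of $\Theta(\ln(D/\delta))$ queries, and you want each discard of a direction $i$ to be backed by a batch ``certifying'' $d(\ell\circ(\sigma'_i)^{-1},\cM) > 2\eps/C$. This certification cannot be done with $O(\ln(D/\delta))$ queries: distinguishing distance~$0$ from distance~$\eps'$ to monotonicity needs $\Omega(1/\eps')$ queries in general, and here $\eps' = \Theta(\eps)$, so a batch of size $O(\ln(D/\delta))$ is insufficient whenever $\eps$ is small. Symmetrically, the dual invariant — that $i^\star$ is never discarded — is hard to maintain because the exact algorithm's decisions rest on querying \emph{specific deterministic boundary points} (the $\partial_{i,B}$, $\partial_{i,R}$, strong-purple boundaries, etc.); if any of those happens to be a corrupted point of $\ell$, a batch ``from the relevant block'' does not obviously reveal the corruption without again spending $\Omega(1/\eps)$ queries. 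Finally, your fallback ``hypothesis-selection phase among the surviving directions'' costs $O(\log(D/\delta)/\eps)$ queries, which does not fit the claimed budget in the regime $D + \log(1/(\eps\delta)) \ll 1/\eps$ unless you additionally prove the number of survivors is $O(1)$; you note this must be ``kept small'' but give no mechanism.

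The paper sidesteps all of this and never robustifies the parallel search. It instead draws $O(\ln(D/\delta))$ \emph{independent} subsamples $S_j$, each of size $\Theta(1/(\eps\delta))$, and on each runs the exact learner \Cref{thm:LSMF} unchanged. The crucial observation is that a single run returns the \emph{entire} labeling $f'_{S_j}$ of $S_j$, and once that labeling is in hand, checking whether it is monotone under $\sigma_i$ is a zero-query computation for \emph{every} direction $i$ simultaneously. Each run succeeds (i.e., $f'_{S_j} = f_{S_j}$) exactly when $S_j$ misses the corrupted set, which happens with probability $\geq 9/10$; when it does, the monotonicity tester of \cite{DodisGLRRS99} applied to $f_{S_j}$ certifies against every far direction with constant probability. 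Voting over the $O(\ln(D/\delta))$ runs and applying Hoeffding then picks a direction $i_R$ with $d(f\circ\sigma_{i_R}^{-1},\cM) \leq \eps\delta/10$ with probability $1-\delta$, and one application of \Cref{lem:mono-agnostic-learn} over $X$ finishes. If you want to salvage your plan, you would need to reorganize it so that each ``batch'' is a fresh run of the exact learner on a fresh subsample (so that the $D$ certificates per run come for free from the recovered labeling), rather than $\ln(D/\delta)$ extra queries inside one corrupted run — which is precisely what the paper does.
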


\begin{proof} Our algorithm is defined in \Alg{agnostic-d}. Let $\sigma_1,\ldots,\sigma_D \colon X \to [n]$ denote the $D$ permutations on $X$ induced by ordering the projections of each $x \in X$ onto each $u \in V$. That is, $\sigma_i(x) = j$ iff $x$ has the $j$'th smallest projection, $\langle x, u_i\rangle$. First, since $\norm{f - g}_1 \leq \frac{\eps\delta}{400}$ for some $g \in \cH(V,X)$, it follows that $d(f\circ \sigma_{i^{\star}}^{-1},\cM) \leq \frac{\eps\delta}{400}$ for some $i^{\star} \in [D]$. The following claim (whose proof we defer momentarily) is the main technical component of the proof of correctness.

\begin{algorithm2e}[ht]
\caption{Tolerant Learner for Decision Stumps \label{alg:agnostic-d}} 
\DontPrintSemicolon
\LinesNumbered
\KwIn{A set $X \subset \RR^d$ of size $|X| = n$, a set $V = \{u_1,\ldots,u_D\} \subset S^{d-1}$ of unit vectors, and query access to $f \colon X \to \{0,1\}$ such that $\norm{f-g}_{1} \leq \frac{\eps\delta}{400}$ for some $g \in \cH(V,X)$}
\KwOut{A halfspace $h \in \cH(V,X)$ such that $\norm{f-h}_1 \leq \eps$ with probability $1-\delta$}
For $i \in [D]$, let $\sigma_i \colon X \to [n]$ be the ordering on $X$ induced by the projection of $X$ onto $u_i$\;
\For{$j \leq 50 \ln (D/\delta)$} {
Draw a set $S_j \subset X$ of $s := \frac{40}{\eps\delta}$ iid uniform samples from $X$\;
Run the exact learning algorithm $\mathsf{ShuffledMonotoneLearner}(f_{S_j},S_j,\sigma_1,\ldots,\sigma_D)$ of \Cref{thm:LSMF} on the restriction $f_{S_j}$ using $O(\min(D+\log(\frac{1}{\eps\delta}),\frac{1}{\eps\delta}))$ queries and let $f_{S_j}' \colon {S_j} \to \{0,1\}$ denote the labeling of $S_j$ it returns\;
For each $i \in [D]$, if $f_{S_j}' \circ \sigma_D$ is non-decreasing on $S_j$, set $Z_{ij} = 1$. Else, set $Z_{ij} = -1$\; \label{alg:line:Zij}
}
For each $i \in [D]$, let $Z_i = \sum_j Z_{ij}$ and set $i_R \gets \argmax_i Z_i$\; \label{alg:line:iR}
Run $\mathsf{RandomBinarySearch}(f \circ \sigma_{i_R}^{-1},\eps\delta/10,\delta)$, and output the labeling function it returns\; \label{alg:line:binarysearch}
\end{algorithm2e}

\begin{claim} \label{clm:concentration} With probability $1-\delta$, the index $i_R$ obtained in \Line{alg:line:iR} satisfies $d(f\circ \sigma_{i_R}^{-1},\cM) \leq \eps\delta/10$. \end{claim}

By \Clm{clm:concentration}, the call to $\mathsf{RandomBinarySearch}$ in \Line{alg:line:binarysearch} successfully returns a monotone function $h \colon [n] \to \{0,1\}$ such that $\norm{f - h \circ \sigma_{i_R}}_1 \leq \eps$ with probability $1-\delta$ by \Lemma{lem:mono-agnostic-learn}. By a union bound, the algorithm succeeds with probability $1-2\delta$. Moreover, since $h$ is monotone, we have $h \circ \sigma_{i_R} \in \cH(V,X)$ and so the algorithm is a proper learner. This completes the proof. \end{proof}


\begin{proofof}{\Clm{clm:concentration}} Let $F = \{i \in [D] \colon d(f\circ \sigma_{i}^{-1},\cM) > \eps\delta/10\}$ and observe that 
\begin{align} \label{eq:union-F}
    \Pr[d(f\circ \sigma_{i_R}^{-1},\cM) > \eps\delta/10] \leq \Pr[\exists i \in F \colon Z_i \geq Z_{i^{\star}}] \leq \sum_{i \in F} \Pr[Z_i - Z_{i^{\star}} \geq 0]
\end{align}
where $Z_i = \sum_j Z_{ij}$ is the random variable defined in \Line{alg:line:iR}. Note that $Z_i - Z_{i^{\star}} = \sum_j Y_j$ where $Y_j := Z_{ij} - Z_{i^{\star}j}$ are iid random variables supported on $\{-2,0,2\}$. We can now bound $\mathbb{E}[Y_j]$ and then bound $\Pr[Z_i-Z_{i^{\star}}]$ by standard concentration bounds and independence of the $Y_j$'s.

Let $\cE_{ij}$ denote the event that $f_{S_j} \circ \sigma_{i}^{-1}$ is non-decreasing. Now, when $\cE_{i^{\star}j}$ holds, the restricted labeling $f_{S_j}$ satisfies the pre-condition for the exact learning algorithm $\mathsf{ShuffledMonotoneLearner}$ of \Cref{thm:LSMF} to successfully compute the labeling. I.e., $\cE_{i^{\star}j}$ implies $f_{S_j}' = f_{S_j}$ in \Line{alg:line:Zij}. Note that when $\cE_{i^{\star}j}$ does not hold, we have no guarantee about $f_{S_j}'$. Nonetheless, by a union bound, we have
\begin{align} \label{eq:Y}
    \Pr[Y_j \neq -2] &\leq \Pr[Z_{ij} = 1] + \Pr[Z_{i^{\star}j} = -1] \nonumber \\ 
    &\leq \Pr[\neg\cE_{i^{\star}j} \vee \cE_{ij}] + \Pr[\neg \cE_{i^{\star}j}] \leq 2\Pr[\neg \cE_{i^{\star}j}] + \Pr[\cE_{ij}]
\end{align}

We now lower bound the probability of $\cE_{i^{\star}j}$. Let $h \colon [n] \to \{0,1\}$ be a monotone function such that $\norm{f \circ \sigma_{i^{\star}}^{-1} - h}_1 \leq \frac{\eps \delta n}{400}$ and observe that $|\Delta(f,h)| = |\{x \in X \colon f(x) \neq h \circ \sigma_{i^{\star}}(x)| < \frac{\eps\delta n}{400}$. Moreover, if $S_j \cap \Delta(f,h) = \emptyset$, then $\cE_{i^{\star}j}$ holds. Thus,
\begin{align} \label{eq:istar}
    \Pr_{S_j}[\cE_{i^{\star}j}] \geq \Pr_{S_j}[S_j \cap \Delta(f,h) = \emptyset] \geq \left(1-\frac{\eps\delta}{400}\right)^{\frac{40}{\eps\delta}} \geq e^{-1/10} > 9/10 \text{.}
\end{align}

We now upper bound $\Pr[\cE_{ij}]$ for $i \in F$. We will use the following lemma from the literature on property testing of monotone functions.

\begin{lemma} [Lemma 16, \cite{DodisGLRRS99}] \label{lem:mono-test} Let $g \colon [n] \to \{0,1\}$ such that $d(f,\cM) \geq \eps$. Then, choosing $s \geq 2$ points $j_1,\ldots,j_s \in [n]$ iid and uniformly at random from $[n]$, we observe a violation of monotonicity for $g$ with probability at least $(1-e^{-\eps s /2})^2$. \end{lemma}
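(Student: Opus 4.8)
The plan is to combine a structural fact about Boolean functions on the line with a (slightly delicate) sampling calculation.

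\textbf{Structural step.} I would first argue that if $d(g,\cM)\ge\eps$ then $g$ admits a matching of $m := d(g,\cM)\cdot n\ge\eps n$ pairwise-disjoint \emph{violating pairs}: pairs $(a_k,b_k)$ with $a_k<b_k$, $g(a_k)=1$, $g(b_k)=0$, all $2m$ endpoints distinct. The point is that for $g\colon[n]\to\{0,1\}$, a set $S\subseteq[n]$ of positions is a vertex cover of the bipartite "violating-pair" graph (a node per position, an edge $\{i,j\}$ whenever $i<j$, $g(i)=1$, $g(j)=0$) if and only if flipping $g$ on $S$ destroys every decreasing pair, i.e. makes $g$ monotone; hence the minimum vertex cover has size exactly $d(g,\cM)\cdot n$, and by König's theorem the maximum matching has the same size. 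By iteratively un-nesting pairs — replacing nested $(a_i,b_i),(a_j,b_j)$ with $a_i<a_j<b_j<b_i$ by the (still valid, same-size) pairs $(a_i,b_j),(a_j,b_i)$ — I may also assume $a_1<\dots<a_m$ and $b_1<\dots<b_m$, and then $a_i\le a_j<b_j$ shows that $(a_i,b_j)$ is a violating pair for \emph{every} $i\le j$. (A softer variant I might use instead: since $|\{x\le t:g(x)=1\}|+|\{x>t:g(x)=0\}|\ge\eps n$ for every threshold $t$, and the difference of the two summands increases by exactly one at each step of $t$, the summands are equal at some $t^\star$, yielding disjoint sets $A=\{x\le t^\star:g(x)=1\}$ and $B=\{x>t^\star:g(x)=0\}$ with $|A|=|B|\ge\eps n/2$ and every point of $A$ preceding every point of $B$.)

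\textbf{Probabilistic step.} The sample $j_1,\dots,j_s$ witnesses a violation as soon as it contains both endpoints of one of these pairs, and I would lower-bound this probability and reshape it into $(1-e^{-\eps s/2})^2$. For the matching version: split the sample into two halves of size $\ge s/2$; the first half hits $\{a_1,\dots,a_m\}$ with probability $\ge 1-(1-\eps)^{s/2}\ge 1-e^{-\eps s/2}$ and, \emph{independently}, the second half hits $\{b_1,\dots,b_m\}$ with probability $\ge 1-e^{-\eps s/2}$; conditioned on both, the only way to miss a violating pair $(a_i,b_j)$ with $i\le j$ is that the smallest sampled $a$-index exceeds the largest sampled $b$-index, which one checks has negligible probability, so the product bound $(1-e^{-\eps s/2})^2$ essentially survives. (For the separated variant it suffices that the sample hits $A$ and hits $B$; one then estimates $1-2(1-\alpha)^s+(1-2\alpha)^s$ with $\alpha=|A|/n\ge\eps/2$ and adds in the contribution of the within-side violating pairs.) Since the target bound is monotone in $\eps$, any $g$ with $d(g,\cM)\ge\eps$ then satisfies it.

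\textbf{Main obstacle.} The structural step is essentially König's theorem plus a clean-up. The delicate part is the probabilistic step: the \emph{naive} sufficient event "sample hits $A$ and sample hits $B$" for the separated sets of size $\eps n/2$ has probability $1-2(1-\alpha)^s+(1-2\alpha)^s$, which for small $s$ is strictly smaller than $(1-e^{-\eps s/2})^2$ — under fixed-size sampling the events "hit $A$" and "hit $B$" are negatively correlated — so the stated constant genuinely requires exploiting the extra detecting configurations (off-diagonal matched pairs $(a_i,b_j)$ with $i<j$, or violating pairs lying entirely on one side of $t^\star$) and verifying that the bookkeeping collapses exactly to $(1-e^{-\eps s/2})^2$; this is the computation of \cite{DodisGLRRS99}, and is where I would concentrate the effort.
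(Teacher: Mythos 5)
The paper does not prove this lemma; it is imported as a black box from \cite{DodisGLRRS99} (their Lemma 16), so there is no in-paper argument to compare against, and I will assess your sketch on its own. Your structural step is fine and standard: the K\H{o}nig-theorem identity between $d(g,\cM)\cdot n$ and the maximum matching of the bipartite violation graph, the un-nesting clean-up, and the intermediate-value construction of a threshold $t^{\star}$ giving separated sets $A,B$ of equal size $\geq\eps n/2$ are all correct.

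The gap is in the probabilistic step, and in fact your ``negligible probability'' claim is false at the boundary of the lemma's range. Take $s=2$ (which the lemma allows) and the fully interleaved matching $a_1<b_1<a_2<b_2<\dots<a_m<b_m$. Conditioned on the first half (one sample) hitting some $a_i$ and the second half (one sample) hitting some $b_j$, the indices $i,j$ are independent and uniform on $[m]$, so $\Pr[i_{\min}>j_{\max}]=\Pr[i>j]=\tfrac12(1-1/m)\approx\tfrac12$ --- not negligible. The half-split also discards a factor $2$ in the hitting probability itself (for $s=2$ you get $\eps\cdot\eps$ rather than the $\approx 2\eps^2$ available from allowing the $a$- and $b$-hit to occur in either order), and together these two losses exceed the slack in the target $(1-e^{-\eps s/2})^2$. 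You also correctly observe that the softer separated-sets variant fails on its own: $1-2(1-\alpha)^s+(1-2\alpha)^s$ with $\alpha\ge\eps/2$ drops below $(1-e^{-\alpha s})^2$ for small $s$ (e.g.\ $\alpha=0.1,\ s=2$ gives $0.02<0.033$). So the correct strategy is the one you name --- count \emph{all} detecting configurations, in particular the off-diagonal matched pairs $(a_i,b_j)$ with $i<j$, and avoid the lossy half-split --- but that calculation is precisely the content of \cite{DodisGLRRS99} and you explicitly defer it. As written, this is an honest sketch that identifies the right ingredients and the right obstacle, but it is not a proof of the stated bound.
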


Using \Lemma{lem:mono-test} and the definition of $F$, we have
\begin{align} \label{eq:ifar}
    i \in F ~\implies~ \Pr[\cE_{ij}] \leq 1 - \left(1-e^{-(\frac{\eps\delta}{10} \cdot \frac{40}{\eps\delta})/2}\right)^2 = e^{-2} < 1/7     
\end{align}
and plugging \cref{eq:istar} and \cref{eq:ifar} into \cref{eq:Y} yields $\Pr[Y_j \neq -2] < \frac{1}{5} + \frac{1}{7} < \frac{2}{5}$, which implies
\begin{align}
    \mathbb{E}[Y_j] \leq -2 \Pr[Y_j = -2] + 2\Pr[Y_j \neq -2] < -2 \cdot \frac{3}{5} + 2 \cdot \frac{2}{5} = -\frac{2}{5}
\end{align}
Therefore, recalling that $Z_i - Z_{i^{\star}} = \sum_{j \leq t} Y_j$, we have 
\[\mathbb{E}[Z_i - Z_{i^{\star}}] < -(2/5) \cdot 50\ln(D/\delta) = -20\ln(D/\delta)\]
and since the $Y_j$'s are independent, we have by Hoeffding's inequality that
\begin{align*}
\Pr\big[Z_{i} -  Z_{i^{\star}} \geq 0\big] &\leq \Pr\left[\big|(Z_{i} -  Z_{i^{\star}}) - \mathbb{E}[Z_{i} -  Z_{i^{\star}}]\big| > 20\ln(D/\delta)\right] 
\\&\leq \exp\left(- \frac{2 \cdot 400 \ln^2(D/\delta)}{16 \cdot 50 \ln(D/\delta)}\right) = \frac{\delta}{D}
\end{align*}
and plugging this bound into \cref{eq:union-F} completes the proof. \end{proofof}

\subsection{Realizable Learning} \label{sec:realizable}

In this section we obtain a slightly improved algorithm for the realizable setting by adapting the exact learning algorithm for shuffled monotone functions in \Cref{thm:LSMF}.


\begin{theorem} \label{thm:realizable-formal} Let $X \subset \RR^d$ be any set of $n$ points and $V \subset S^{d-1}$ an arbitrary set of $D$ unit vectors. For any $\eps,\delta > 0$, there is a randomized algorithm which, given query access to $f \colon X \to \{0,1\}$ such that $f \in \cH(V,X)$, uses $O(\min(D + \log(\frac{1}{\eps}),\frac{\ln (D/\delta)}{\delta\eps}))$ queries and returns a halfspace $h \in \cH(V,X)$ such that $\norm{f-h}_1 \leq \eps$ with probability $1-\delta$. If $D + \log(\frac{1}{\eps}) \leq \frac{\ln D}{\delta\eps}$, the algorithm is deterministic. \end{theorem}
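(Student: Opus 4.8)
The plan is to obtain a realizable $\eps$-learner for $\cH(V,X)$ in two regimes and take whichever query bound is smaller. The overall strategy mirrors the structure of the tolerant learner (\Cref{thm:agnostic-formal}): reduce to the shuffled monotone function problem via the $D$ permutations $\sigma_1,\dots,\sigma_D$ induced by projecting onto each $u_i$, and exploit the fact that $f \in \cH(V,X)$ guarantees $f \circ \sigma_{i^\star}^{-1}$ is exactly monotone for some $i^\star$. Because we are in the realizable setting, we do not need the full machinery of \Cref{lem:mono-agnostic-learn}; ordinary (error-free) binary search on a monotone sequence suffices once we know the right direction.

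First I would handle the regime $D + \log(1/\eps) \le \frac{\ln D}{\delta\eps}$, where the algorithm should be deterministic. The idea is to run $\mathsf{ShuffledMonotoneLearner}$ of \Cref{thm:LSMF} not on all of $X$ but on a coarsening: for each $i$, it suffices to learn the threshold in $\sigma_i$ up to a block of $\eps n$ consecutive points, since misclassifying within one such block costs at most $\eps$ error. Concretely, one can partition each ordering into $O(1/\eps)$ blocks, pick one representative per block, and run the exact learner on the $O(1/\eps)$-element "quotient" instance — but the permutations must be made consistent across directions, which is the subtle point. The cleanest route is instead to adapt the exact algorithm directly: in \Alg{LSMF}, once $n$ (the size of the working set $Z$) drops below $\eps |X|$, stop the binary-search recursion and simply output, for each surviving $i \in S$, the hypothesis $h^{(i)}_j$ that places the threshold anywhere inside the current window $Z$ — any such choice is within $\eps$ of $f$ when $i = i^\star$. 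Running $\mathsf{Contender}$ on the resulting candidate set $C$ of size $O(D)$ then recovers an $h$ that agrees with $f$ outside a set of measure $\le \eps$, for $O(D + \log(1/\eps))$ queries total, deterministically. I need to check that the invariant of \Cref{clm:invariant} still yields $\|f - h\|_1 \le \eps$: when $i^\star \in S$ at termination, $x^\star, y^\star \in Z$ and $|Z| \le \eps |X|$, so the candidate $h^{(i^\star)}_j$ differs from $f$ only on $Z$; when $i^\star \notin S$, already $f \in C$ exactly.

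For the second regime, $\frac{\ln D}{\delta\eps} < D + \log(1/\eps)$, I would use the sampling approach of \Alg{agnostic-d} but stripped down: draw $O(1/(\delta\eps))$ uniform samples $S$, run $\mathsf{ShuffledMonotoneLearner}$ on the restriction $f_S$, and use the returned labeling together with the monotonicity-violation test to identify a direction $i_R$ along which $f \circ \sigma_{i_R}^{-1}$ is close to monotone; repeat $O(\ln(D/\delta))$ times and take a majority-style argmax of the "looks monotone" counters, exactly as in \Cref{clm:concentration}. Since $f$ is exactly realizable, $d(f \circ \sigma_{i^\star}^{-1}, \cM) = 0$, so the concentration argument of \Cref{clm:concentration} goes through verbatim (indeed more easily, since $\Pr[\cE_{i^\star j}] = 1$), and we conclude $d(f \circ \sigma_{i_R}^{-1}, \cM) \le \eps$ except with probability $\delta$. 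A final error-free binary search on $f \circ \sigma_{i_R}^{-1}$ — or one application of \Cref{lem:mono-agnostic-learn} with its $\eps$ parameter — returns a monotone $h$ with $\|f - h \circ \sigma_{i_R}\|_1 \le \eps$, and $h \circ \sigma_{i_R} \in \cH(V,X)$ so the learner is proper. The query count is $O(\ln(D/\delta)) \cdot O(1/(\delta\eps))$ for the sampling rounds plus $O(\log(1/\eps))$ for the final search, giving the claimed $O(\frac{\ln(D/\delta)}{\delta\eps})$.

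The main obstacle I anticipate is making the first (deterministic) regime's early-termination argument fully rigorous: I must verify that truncating \Alg{LSMF} when $|Z| \le \eps|X|$ still satisfies all preconditions of \Cref{lem:main-sub-EX} up to that point, that the candidate hypotheses inserted at truncation are genuinely in $\cH(V,X)$ (they are, being of the form $h^{(i)}_j$), and — most delicately — that $\mathsf{Contender}$ applied to a set containing an $h$ with $\|f - h\|_1 \le \eps$ rather than $f$ exactly still returns something within $\eps$ of $f$. This last point requires a small modification to \Cref{clm:candidates}'s guarantee: instead of "if $f \in C$ then output $f$," I need "if some $h \in C$ has $\|f-h\|_1 \le \eps$, then the output $h'$ has $\|f - h'\|_1 \le \eps$," which does \emph{not} follow from the stated claim and needs a separate short argument — essentially, run $\mathsf{Contender}$ but when querying a disagreement point $x$ between $h_1, h_2$, whichever of $h_1, h_2$ disagrees with $f$ at $x$ is eliminated, and the hypothesis close to $f$ survives unless it too disagrees at $x$, so one should instead argue via the block structure directly rather than through the literal $\mathsf{Contender}$ guarantee. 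Sorting out this interface cleanly is where the real work lies; everything else is assembly of already-established pieces.
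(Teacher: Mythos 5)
Your overall plan mirrors the paper's: in the regime where $\frac{\ln(D/\delta)}{\delta\eps}$ is the smaller bound, invoke the tolerant learner of \Cref{thm:agnostic-formal}; in the regime where $D+\log(1/\eps)$ wins, truncate \Alg{LSMF} once $|Z|\le\eps n$. The randomized half is essentially the paper's proof. However, there is a genuine gap in your deterministic half, which you notice but do not actually close.

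You propose inserting, for each surviving $i\in S$ at truncation, an approximate candidate $h^{(i)}_j$ with $j$ inside the current window and then running $\mathsf{Contender}$. As you observe, $\mathsf{Contender}$'s guarantee only protects a hypothesis that \emph{exactly} equals $f$: an $\eps$-close candidate can be eliminated as soon as $\mathsf{Contender}$ happens to probe one of the few points where it disagrees with $f$, after which the surviving hypothesis may be arbitrarily far from $f$. Your sketch of a fix (``the hypothesis close to $f$ survives unless it too disagrees at $x$'') is exactly the failure mode, not a proof that it is avoided. The paper resolves this differently: it leaves $C$ unchanged --- $C$ holds only the \emph{exact} candidates $h^{(i)}_j$ inserted when a boundary pair was found for an eliminated $i$ --- and runs $\mathsf{Contender}$ on $C$ alone to get $h^C$, which equals $\ell$ whenever $i^\star\notin S$. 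Separately, it builds a second hypothesis $h^Z$ whose values on $X\setminus Z$ are the labels already learned via item (3) of \Cref{lem:main-sub-EX} when points were pruned from $Z$, and which is $0$ on $Z$; when $i^\star\in S$ the invariant \Cref{clm:invariant} places the true boundary inside $Z$, so $h^Z$ is $\eps$-close. The decision between $h^C$ and $h^Z$ then takes at most one extra query: find any $x\in X\setminus Z$ where they disagree and return the one agreeing with $\ell(x)$; since $h^Z$ is \emph{provably correct} on $X\setminus Z$ in the case $i^\star\in S$, a disagreement there rules out $h^C$, and if they agree on all of $X\setminus Z$ then $h^C$ is itself $\eps$-close. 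This two-hypothesis, one-query arbitration is the missing idea; without it, running $\mathsf{Contender}$ on approximate candidates does not yield the stated guarantee.
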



\begin{proof} There is a $O(\frac{\ln (D/\delta)}{\delta\eps})$ query algorithm using the tolerant algorithm of \Cref{thm:agnostic-formal}. 

We show that in the realizable setting when $D + \log(1/\eps)) \ll \frac{\ln (D/\delta)}{\delta\eps}$, we can remove the $\ln(D/\delta)$ factor from the query complexity in \Cref{thm:agnostic-formal}. To achieve this, we describe how to slightly modify the exact learning algorithm of \Cref{thm:LSMF} to obtain a deterministic algorithm that correctly labels all but an $\eps$-fraction of $X$ using $O(D + \log(1/\eps))$ queries.

\begin{theorem} \label{thm:LSMF-eps} For any $D,n$ and $\eps > 0$, there is a deterministic algorithm that $\eps$-learns a shuffled monotone function using $O(D + \log(1/\eps))$ label queries, and returns a monotone labeling function. \end{theorem}

Again, since \Cref{thm:LSMF-eps} returns a monotone hypothesis, we obtain a proper learner. \end{proof}



\begin{proofof}{\Cref{thm:LSMF-eps}} We consider a slight modification of the algorithm described in \Alg{LSMF}. Change \Line{alg:line:while} so that the while-loop terminates when $n < \eps n$ (instead of $n < 10$). That is, we run the algorithm of \Cref{thm:LSMF} until there are $|Z| \leq \eps n$ points remaining. This uses $O(D + \log(1/\eps))$ queries and returns a set $C$ of candidate hypotheses which by \Clm{clm:invariant} is guaranteed to contain the true labeling $\ell$ in the case that $i^{\star} \notin S$. Now, by \Clm{clm:candidates} we can use at most $D$ queries and return a single representative $h^C \in C$ from this collection with the guarantee that if $\ell \in C$, then $h^C = \ell$. In summary, if $i^{\star} \notin S$, then we have $h^C = \ell$.

Next, we define at candidate hypothesis $h^Z \colon X \to \{0,1\}$ using the remaining points, $Z$. By item (3) of \Lemma{lem:main-sub-EX}, whenever we remove points from $Z$, we in fact learn the true label of these points and so we can simply define $h^{Z}(x) = \ell(x)$ whenever a point is removed. Finally, set $h^Z(Z) = 0$. Now, note that if $i^{\star} \in S$, then by \Clm{clm:invariant}, we have $x^{\star},y^{\star} \in Z$ and so $h^Z(x) = \ell(x)$ for every $x \in X \setminus Z$, and thus $\norm{f - h^Z} \leq \eps$. In summary, if $i^{\star} \in S$, then $\norm{f - h^Z} \leq \eps$. 

The only remaining task is to determine which of $h^C$ or $h^Z$ to return. If there exists a point $x \in X \setminus Z$ such that $h^C(x) \neq h^Z(x)$, then query $\ell(x)$ and return the hypothesis which agrees with $\ell$ at $x$. Otherwise, return $h^C$. If $i^{\star} \not\in S$, then $h^C = \ell$, as argued above. In this case we will always return $h^C$. Otherwise, as argued, we have $\norm{f - h^Z} \leq \eps$. If $h^Z(X \setminus Z) = h^C(X \setminus Z)$, then we also have $\norm{f - h^C} \leq \eps$ and this is what we return. Otherwise, we return $h^Z$. The returned hypothesis is $\eps$-close to $f$ in every case. Moreover, in both cases the returned function is monotone as claimed. \end{proofof}

\subsection{Randomized Binary Search} \label{sec:mono-agnostic-learn}

\begin{proofof}{\Lemma{lem:mono-agnostic-learn}} The algorithm (\Alg{random-binary-search}) performs a randomized binary search until there are fewer than $C \eps n$ unknown elements remaining (we will set $C \approx 1/\delta$ later on in the proof).

\begin{algorithm2e}[ht]
\caption{Randomized Binary Search Learner \label{alg:random-binary-search}} 
\DontPrintSemicolon
\LinesNumbered
\KwIn{$\eps,\delta > 0$ and query access to $f \colon [n] \to \{0,1\}$ such that $\norm{f-h}_{1} \leq \eps$ for some monotone function $h \colon [n] \to \{0,1\}$}
\KwOut{Monotone function $g \colon [n] \to \{0,1\}$ such that $\norm{f - g}_1 \leq 10\eps/\delta$ with probability $1-\delta$} 
Let $\cI \gets [n]$\;
\While{$|\cI| > C \eps n$} {
Partition $\cI$ into three subintervals of equal size $\cI = \cI_L \sqcup \cI_M \sqcup \cI_R$\;
Choose $i$ uniformly at random from $\cI_M$\footnote{This ensures every step reduces $|\cI|$ by a constant factor.} and query $f(i)$\;
If $f(i) = 0$, then set $g(\cI_L) = 0$ for all $z \leq i$ and set $\cI \gets \cI_M \cup \cI_R$\;
If $f(i) = 1$, then set $g(R) = 1$ for all $z \geq i$ and set $\cI \gets \cI_L \cup \cI_M$\;
}
For each $z \in \cI$, set $g(z) = 0$\;
\end{algorithm2e}

Let $h \in \cM$ minimize $\norm{f - h}_1$ and let $\Delta(f,h) = \{i \in [n] \colon f(i) \neq h(i)\}$. Note that $|\Delta(f,h)| \leq \eps n$. Observe that if every query of the above algorithm is from $[n] \setminus \Delta(f,h)$, then the output $g$ is guaranteed to satisfy $\norm{h - g}_1 \leq C\eps$ and therefore by the triangle inequality 
\[
\norm{f - g}_1 \leq \norm{f - h}_1 + \norm{h - g}_1 \leq (C + 1) \eps \text{.}
\]
Thus, it suffices to show that with high probability every query misses $\Delta(f,h)$. 

We first bound the number of queries $q$ made by the algorithm. Let $\cI_t$ denote the interval $\cI$ after the first $t-1$ queries. Note that by design of the algorithm, we always have $|\cI_t| = (2/3)|\cI_{t-1}|$ and so $|\cI_t| = (2/3)^t n$. Therefore, the total number of queries is bounded by
\[
q \leq \log_{3/2} n - \log_{3/2} (C \eps n) = \log_{3/2} (1/C\eps) \text{.}
\] 
Now, let $B_t$ denote the bad event that the $t$-th query lands in $\Delta(f,h)$ and let $B$ denote the union of these events for $t \leq q$. Since we choose the $t$-th query in step (2b) from the middle third of the interval $\cI_t$, we have $\Pr[B_t] \leq \frac{\eps n}{(1/3)|\cI_t|}$. Then, using our bounds on $\Delta(f,h)$, $q$, $|\cI_t|$, and a union bound we have
\begin{align*} 
    \Pr[B] \leq \sum_{t \leq q} \frac{3 \eps n }{(2/3)^t n} = 3\eps \sum_{t \leq q} (3/2)^t \leq 3\eps \cdot 3 (3/2)^q \leq 9/C
\end{align*}
and setting $C = 9/\delta$ completes the proof. \end{proofof}
\section{Lower Bound for Learning Decision Stumps} \label{sec:star-num}

\begin{theorem} [Theorem 3 of \cite{DBLP:journals/jmlr/HannekeY15}] \label{thm:Hanneke-Yang} 
Fix $n,d \in \NN$ and $\eps \geq \frac{1}{2n}$. Any membership query algorithm for $\eps$-realizable learning of decision stumps in $\RR^d$ over $n$ points must use
\[
\Omega\left(\min\Big(d + \log \frac{1}{\eps}, \frac{1}{\eps}\Big) + \min \Big(\log d, \log n\Big)\right) \textrm{ queries}
\]
in the worst case, and there is an algorithm using $O(\min(d \log(1/\eps), \frac{\log d \log (1/\eps)}{\eps}, n)$ queries. \end{theorem}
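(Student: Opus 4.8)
The statement has two halves, and I would treat them separately. The upper bound $O\!\big(\min(d\log\tfrac1\eps,\tfrac{\log d\log(1/\eps)}{\eps},n)\big)$ is assembled from known building blocks, while the lower bound $\Omega\!\big(\min(d+\log\tfrac1\eps,\tfrac1\eps)+\min(\log d,\log n)\big)$ is proved by exhibiting adversarial instances: the adversary designs a worst-case point set $X\subset\RR^d$ together with a family of candidate decision stumps, all consistent with its answers and pairwise far apart, so that no learner can commit to an $\eps$-close hypothesis early. (An alternative, shorter route for the lower bound is to recall \cite{DBLP:journals/jmlr/HannekeY15}'s general bounds in terms of the star number and then simply compute that $\DS_d$ has star number $\Theta(d)$ and the relevant auxiliary parameters; I would present the direct argument since it is more in the spirit of ``for completeness,'' and keep the parameter-plugging route as a remark.)

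\textbf{Lower bound.} The first observation is that $a+b=\Theta(\max(a,b))$ and $\min(d+\log\tfrac1\eps,\tfrac1\eps)=\Theta\!\big(\max(\min(d,\tfrac1\eps),\log\tfrac1\eps)\big)$, so up to constants the target bound is the maximum of the three quantities $\min(d,\tfrac1\eps)$, $\log\tfrac1\eps$, and $\min(\log d,\log n)$; since a lower bound is a claim about the worst input, it suffices to build three separate instances, one forcing each term. \emph{(i)} For $\Omega(\min(d,\tfrac1\eps))$: let $m=\min(d,\lfloor\tfrac1{3\eps}\rfloor)$, and for each $i\le m$ place $\lceil\eps n\rceil$ identical points at $2e_i$; the base stump $\mathbf 1(x_1\ge 3)$ labels all of them $0$, while $\mathbf 1(x_i\ge 1)$ flips exactly the $i$-th block, an $\eps$-fraction. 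An adversary that answers $0$ to every query keeps the base stump and every un-queried flipped stump alive, any two survivors are pairwise $\eps$-far, and after fewer than $m$ queries no $\eps$-close output exists (for exact learning $\lceil\eps n\rceil=1$, giving $\Omega(d)$). \emph{(ii)} For $\Omega(\log\tfrac1\eps)$: along a single coordinate place $\min(\lfloor\tfrac1\eps\rfloor,n)$ groups of $\lceil\eps n\rceil$ identical points at positions $1,2,\dots$, let the target be $\mathbf 1(x\ge\theta)$ with $\theta$ unknown, and invoke the standard binary-search adversary (each answer chosen so that the largest number of mutually $\eps$-far thresholds survives), which forces $\Omega(\log\min(\tfrac1\eps,n))=\Omega(\log\tfrac1\eps)$ queries using $\eps\ge\tfrac1{2n}$. \emph{(iii)} For $\Omega(\min(\log d,\log n))$ (binding when $\eps$ is a sufficiently small constant): on $m'=\min(d,n)$ coordinates take a pool of $\Theta(\log m')$ points for which every pair of coordinate-dictators disagrees on a constant fraction of the pool (a union bound over the $\binom{m'}{2}$ pairs shows a uniformly random subset of $\{0,1\}^{m'}$ works), let the target be one of the $m'$ dictators; each label query splits the surviving dictator set by at most a factor two, the survivors remain pairwise $\Omega(1)$-far hence $\eps$-far, so $\Omega(\log m')$ queries are required.

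\textbf{Upper bound.} $O(n)$ is trivial. For $O(d\log\tfrac1\eps)$: for each coordinate $i$ run the binary search of \Cref{obs:simple} on the ordering $\sigma_i$, but stop once the undetermined interval has at most $\eps n$ points, getting a candidate stump $h^{(i)}$; the candidate $h^{(i^\star)}$ for the true coordinate is $\eps$-close to the target, so returning any candidate consistent with all $O(d\log\tfrac1\eps)$ queries (equivalently, a Contender-style elimination, \Cref{clm:candidates}) is $\eps$-close — standard ERM-over-a-bounded-VC-class reasoning in the realizable setting. The remaining term $O\!\big(\tfrac{\log d\log(1/\eps)}{\eps}\big)$ is exactly \cite{DBLP:journals/jmlr/HannekeY15}'s general PAC active-learning guarantee specialized to $\DS_d$ (alternatively: reduce by uniform convergence to being consistent on a pool of $\Theta(1/\eps)$ unlabeled samples, then search for a consistent stump on that pool).

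\textbf{Main obstacle.} The subtlety is concentrated in the lower bound. One might worry that the additive ``$+$'' forces a single instance exhibiting several difficulties at once — which is awkward, since a decision stump uses only one coordinate and is trivial on every other block — but the reduction to a maximum above removes that worry. What remains delicate is making each gadget airtight: in (i) the adversary is adaptive and must keep $\Theta(\min(d,\tfrac1\eps))$ mutually $\eps$-far stumps simultaneously alive while respecting all answers given, in (iii) one needs the pairwise-distance concentration and the factor-two splitting argument to go through with the right pool size, and throughout one must track the regime boundaries of the nested minima (in particular that $\eps\ge\tfrac1{2n}$ automatically caps $\log\tfrac1\eps$ at $O(\log n)$, so no separate $\log n$ truncation of that term is needed) so that the bounds compose to the literal claimed expression rather than something a logarithmic factor weaker.
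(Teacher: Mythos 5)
Your lower-bound approach is a genuinely different route from the paper's. The paper does exactly what you flag as the ``alternative, shorter route'': it cites Theorem~3 of \cite{DBLP:journals/jmlr/HannekeY15} verbatim (lower bound $\Omega(\max\{\min(\mathfrak{s},1/\eps),\mathrm{VC},\log\min(1/\eps,|\cH|)\})$, upper bound $O(\min\{\mathfrak{s},\mathrm{VC}/\eps,\mathfrak{s}\cdot\mathrm{VC}/\log\mathfrak{s}\}\log(1/\eps))$), then computes the three parameters for $\DS_d$: it proves the star number is exactly $2d$ from scratch (\Cref{lem:ds-star-number}), cites $\mathrm{VC}(\DS_d)=\Theta(\log d)$ from Yildiz, and observes $|\cH|\geq n$ for a generic point set. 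Your three adversarial gadgets are essentially an unrolled, self-contained re-derivation of the specific instance of that general theorem: your gadget (i) is in spirit the star-number lower bound (your $2e_i$ points and flipping stumps are the same family the paper uses to witness $\mathfrak{s}\geq 2d$), your (iii) is the $\mathrm{VC}$ lower bound, and your (ii) is the $\log(1/\eps)$-vs-$\log|\cH|$ term. The trade is the usual one: your route avoids invoking a heavy black-box general theorem for the lower bound and is arguably more transparent about where each term comes from, but it costs more writing and requires you to re-verify the arithmetic that $\min(d+\log\tfrac1\eps,\tfrac1\eps)+\min(\log d,\log n)$ is indeed $\Theta$ of the max of your three quantities (which you do correctly). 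The paper's route is shorter given the citation, and has the additional merit that \Cref{lem:ds-star-number} computes the star number \emph{exactly} rather than only up to constants, which is a stand-alone contribution.

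There is one genuine gap, in your upper bound for the $O(d\log\tfrac1\eps)$ term. You claim that running truncated binary searches along each coordinate and then returning any candidate consistent with the $O(d\log\tfrac1\eps)$ queries is $\eps$-close by ``standard ERM-over-a-bounded-VC-class reasoning in the realizable setting.'' That reasoning does not apply here: ERM/uniform-convergence arguments require the consistency set to be drawn iid from the target distribution (uniform over $X$), but the binary-search queries are chosen adaptively and can systematically avoid a constant-measure region of $X$ on which a wrong candidate $h^{(i)}$, $i\neq i^\star$, is badly off. Consistency with those adaptive queries therefore gives no generalization guarantee, and the Contender subroutine (\Cref{clm:candidates}) does not rescue this either, since its guarantee only kicks in when the \emph{true} labeling lies in the candidate set, which it need not here. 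The paper sidesteps this by simply taking the $O(\mathfrak{s}\log\tfrac1\eps)$ term directly from the cited upper bound of \cite{DBLP:journals/jmlr/HannekeY15}; alternatively, the later \Cref{thm:LSMF-eps} in this paper gives a correct deterministic $O(D+\log\tfrac1\eps)$ argument by carefully constructing \emph{two} hypotheses $h^C$ and $h^Z$ and deciding between them with one extra query, which is exactly the delicate step your ERM shortcut elides. You should either cite \cite{DBLP:journals/jmlr/HannekeY15} for this term (as you already do for the $\tfrac{\log d\log(1/\eps)}{\eps}$ term) or reproduce the $h^C$/$h^Z$ bookkeeping.
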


\begin{proof}
    Note that it is without loss of generality that we require $\eps \geq \frac{1}{2n}$. When $\eps < \frac{1}{n}$, it is the case that $\eps$-learning is equivalent to exact learning.
    
    Let $\Lambda(\cH)$ be the query complexity of any active learning algorithm for a concept class $\cH$. Theorem 3 of \cite{DBLP:journals/jmlr/HannekeY15} lower bounds $\Lambda(\cH)$ in the $\eps$-realizable setting with respect to three quantities: the star number $\mathfrak{s}(\cH)$, the VC dimension $\mathrm{VC}(\cH)$, and the cardinality $|\cH|$,
    \[\Lambda(\cH) \gtrsim \max\left\{\min\Big(\mathfrak{s}(\cH), \frac{1}{\eps}\Big), \mathrm{VC}(\cH),  \log \left(\min \Big(\frac{1}{\eps}, |\cH|\Big)\right)\right\}.\]
    The same theorem also provides the upper bound:
    \[\Lambda(\cH) \lesssim \min\left\{\mathfrak{s}(\cH), \frac{\mathrm{VC}(\cH)}{\eps}, \frac{\mathfrak{s}(\cH) \cdot \mathrm{VC}(\cH)}{\log \mathfrak{s}(\cH)}\right\} \log \frac{1}{\eps}.\]
    
    In the following, we let $X = \{x_1,\ldots, x_n\}$ be a finite point set in $\RR^d$, and let $\cH$ contain the set of all possible distinct labelings of $X$ by the class of decision stumps $\DS_d$,
    \[\cH = \big\{(f(x_1),\ldots, f(x_n) : f \in \DS_d\big\} \subset \{0,1\}^X.\]
    We can select $X$ so that the following complexity measures of $\cH$ are attained:
    \begin{itemize}
        \item The star number of decision stumps $\DS_d$ is $2d$, by \Cref{lem:ds-star-number}. This implies the upper bound $\mathfrak{s}(\cH) \leq 2d$, and directly from the definition of the star number of $\DS_d$, there is also a set $X$ such that $\mathfrak{s}(\cH) = \min\{2d,n\}$. 
        \item The VC dimension of $\DS_d$ is $\Theta(\log d)$ by \cite{DBLP:journals/tnn/Yildiz15}. This implies the upper bound $\mathrm{VC}(\cH) \leq \mathrm{VC}(\DS_d)$. Let $X$ be a shattered set of $\DS_d$, so that $\mathrm{VC}(\cH) = \min\{\log d, \log n\}$.
        \item Let $X$ be any point set such that the first coordinate of each point is distinct. Then, $|\cH| \geq n$.
    \end{itemize}
    By assumption, have $\min\big\{n, \frac{1}{\eps}\big\} = \frac{1}{\eps}$. This implies the following lower bound:
    \begin{align*} 
        \Lambda(\cH) &\geq \Omega\left(\min\Big(d, \frac{1}{\eps}\Big) + \min\Big(\log d, \log n\Big) + \min\Big(\log\frac{1}{\eps}, \log n\Big)\right)
        \\&= \Omega\left(\min\Big(d + \log \frac{1}{\eps}, \frac{1}{\eps}\Big) + \min \Big(\log d, \log n\Big)\right).
    \end{align*}
\end{proof}

\begin{definition} [Star Number, \cite{DBLP:journals/jmlr/HannekeY15}]
    The star number $\mathfrak{s}$ of a function class $\mathcal{H}$ is the largest integer $s$ such that there exist $x_1, \ldots, x_s \in X$ and $h_0, h_1,\ldots, h_s \in \mathcal{H}$ such that for each $i,j \in [s]$:
    \begin{equation} \label{eqn:star-condition}
        h_0(x_j) \neq h_i(x_j) \quad \Longleftrightarrow \quad i = j.
    \end{equation}
    That is, the hypothesis $h_i$ disagrees with $h_0$ only on the $i$th element of $\{x_1,\ldots, x_s\}$. If there is no maximum, then we say that the star number is infinite, $\mathfrak{s} = \infty$.
\end{definition}

\begin{lemma} [Star Number of Decision Stumps] \label{lem:ds-star-number}
    The star number of $\DS_d$ is $2d$.
\end{lemma}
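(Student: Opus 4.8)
The plan is to prove $\mathfrak{s}(\DS_d) = 2d$ by establishing the matching upper and lower bounds separately. For the lower bound $\mathfrak{s}(\DS_d) \geq 2d$, I will exhibit an explicit star configuration. Take the center hypothesis $h_0$ to be the all-zero decision stump (e.g. $\ds_{1,t}$ for $t$ larger than every point's first coordinate, or simply any stump labeling all points $0$). For each coordinate $i \in [d]$, I will place two points $x_i^{+}$ and $x_i^{-}$ and use two stumps thresholding on coordinate $i$: one stump $\ds_{i,t_i^{+}}$ that flips exactly $x_i^{+}$ to label $1$ (i.e. $x_i^{+}$ has coordinate $i$ above $t_i^{+}$, and all other $2d-1$ points have coordinate $i$ below $t_i^{+}$), and similarly a stump that flips exactly $x_i^{-}$. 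Concretely one can take the $2d$ points so that in coordinate $i$, the two points $x_i^{\pm}$ sit at the top two positions and everything else is far below; choosing thresholds just below each of these top positions isolates a single point. This gives $2d$ points and $2d+1$ hypotheses satisfying the star condition \eqref{eqn:star-condition}, so $\mathfrak{s}(\DS_d) \geq 2d$.

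For the upper bound $\mathfrak{s}(\DS_d) \leq 2d$, suppose $x_1,\ldots,x_s$ and $h_0,h_1,\ldots,h_s$ form a star. Write $h_0 = \ds_{i_0,t_0}$ and $h_k = \ds_{i_k,t_k}$. Each $h_k$ disagrees with $h_0$ on exactly the single point $x_k$. The key structural observation is that $h_k$ and $h_0$ are both axis-aligned thresholds; I will argue that for each coordinate $c \in [d]$, at most two indices $k$ can have $i_k = c$. The reason is that if $h_k = \ds_{c,t_k}$ disagrees with $h_0$ only at $x_k$, then $x_k$ lies in the symmetric-difference region of the two halfspaces $\{x_c \geq t_0\}$ (if $i_0 = c$) or more generally in the region where $\mathbf{1}(x_c \geq t_k) \neq h_0(x)$ — and among all points $x_1,\ldots,x_s$, exactly one is "caught" by moving the coordinate-$c$ threshold. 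A cleaner way: restrict attention to coordinate $c$ and sort the points by their $c$-th coordinate; a threshold stump on coordinate $c$ disagreeing with $h_0$ on a unique point forces that point to be the unique point whose $c$-value crosses the threshold relative to where $h_0$'s decision boundary sits, and there are at most two ways to do this (raising vs. lowering the effective boundary by one point, or: the topmost "wrong" point and the bottommost "wrong" point). Summing over the $d$ coordinates gives $s \leq 2d$.

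The main obstacle I anticipate is the upper bound, specifically making the "at most two per coordinate" argument fully rigorous when $h_0$ itself may threshold on a different coordinate than $h_k$, and handling ties in coordinate values. To deal with this I would fix a coordinate $c$ and consider the set $J_c = \{k : i_k = c\}$; the points $\{x_k : k \in J_c\}$ are pairwise the unique disagreement points of stumps on coordinate $c$ with the common reference $h_0$. If two such stumps $\ds_{c,t_k}, \ds_{c,t_{k'}}$ both have $t_k \leq t_{k'}$ and both label all points as $h_0$ does except their own unique point, then comparing them shows the "flipped" points must be nested, forcing $|J_c| \leq 2$ (one point can be flipped by pushing the threshold down past it, one by pushing it up past it, from whatever position matches $h_0$ on all of $x_1,\dots,x_s$ restricted to coordinate $c$). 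I would write this as a short case analysis on whether $h_0$'s restriction to these points is constant or has a single jump. Combined with the lower bound construction, this yields $\mathfrak{s}(\DS_d) = 2d$.
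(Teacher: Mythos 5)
Your upper-bound plan has the right shape and could be completed along the same lines as the paper: fix a coordinate $c$, sort the points $x_1,\dots,x_s$ by their $c$-th value, and argue that at most two stumps thresholding on coordinate $c$ can each disagree with $h_0$ at a single unique point (a binary string is at Hamming distance exactly one from at most two step functions). The paper's way of nailing this down is cleaner and worth adopting: if three same-coordinate stumps have thresholds $t_1 < t_2 < t_3$, the star condition forces $h_2$ to disagree with both $h_1$ and $h_3$ at its unique point $x_2$; but $h_1$ and $h_2$ can disagree only on points with $x_c \in [t_1,t_2)$, while $h_2$ and $h_3$ can disagree only on points with $x_c \in [t_2,t_3)$, and these sets are disjoint. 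That contradiction, plus pigeonhole over $d$ coordinates, gives $s \le 2d$ without the case analysis you were worried about.

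The substantive gap is in your lower bound, which as written does not produce a star of size $2d$. You take $h_0 \equiv 0$ and for each coordinate $i$ want two stumps $\ds_{i,t_i^+}, \ds_{i,t_i^-}$ that each flip exactly one of $x_i^+, x_i^-$ to label $1$. But the paper's stumps are one-sided, $\ds_{i,t}(x) = \mathbf{1}(x_i \ge t)$, so for fixed $i$ the positively-labeled sets $\{x : x_i \ge t\}$ are nested as $t$ varies. Among your $2d$ points, such a set is a singleton only when it contains just the top point in coordinate $i$; any threshold low enough to isolate the second-highest point necessarily also catches the highest. So with a constant-zero center you can realize at most one isolated flip per coordinate, giving a star of size $d$, not $2d$. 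The repair is exactly what the paper does: pick a nonconstant center, e.g.\ $h_0 = \ds_{1,0}$, and points $x_{i,+} = \mathbf{1} - \frac{1}{2}e_i$, $x_{i,-} = -x_{i,+}$, so that $h_0$ labels every $x_{i,-}$ as $0$ and every $x_{i,+}$ as $1$. Then the low threshold $\ds_{i,-3/4}$ flips only $x_{i,-}$ from $0$ to $1$, while the high threshold $\ds_{i,3/4}$ flips only $x_{i,+}$ from $1$ to $0$ — both flip directions become available precisely because $h_0$ is nonconstant on the point set, which is the ingredient your construction was missing.
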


\begin{proof}
    We first show that the star number of $\DS_d$ is at least $2d$. Let $e_1,\ldots, e_d$ be the standard basis vectors and let $\mathbf{1} = e_1+ \dotsm + e_d$. Define the following set with $2d$ instances $\{x_{i,-}, x_{i,+}: i \in[d]\}$,
    \begin{align*} 
        x_{i,-} = \left(-1, \ldots, -1 , -\frac{1}{2} , -1, \ldots, -1\right) \quad\textrm{and}\quad
        x_{i,+} = \bigg(+1, \ldots, +1 , +\frac{1}{2}, +1, \ldots, +1\bigg),
    \end{align*}
    where $x_{i,-} = - \mathbf{1} + \frac{1}{2} e_i$ and $x_{i,+} = \mathbf{1} - \frac{1}{2} e_i$. Define the following collection of hypotheses:
    \[h_{i,-} = f_{i, -3/4} \qquad \textrm{and}\qquad h_{i,+} = f_{i, +3/4},\]
    in addition to the base hypothesis $h_{0} = f_{1,0}$. By construction, each hypothesis assigns $x_{i,-}$ the label 0 except for $h_{x,-}$, which assigns it 1. The reverse is true for $x_{i,+}$, where all hypotheses except for $h_{i,+}$ assigns it the label 1. Thus, \cref{eqn:star-condition} holds, so that $\mathfrak{s}(\DS_d) \geq 2d$.

    To show an upper bound on the star number, let $\{x_1,\ldots, x_s\} \in \RR^d$ and $h_0,h_1,\ldots, h_s \in \DS_d$ be any collection of instances and hypotheses satisfying \cref{eqn:star-condition}. Suppose that $s > 2d$. Then, by the pigeonhole principle, there are at least three hypotheses that threshold on the same coordinate. Without loss of generality, suppose that they are the first three hypotheses $h_1, h_2, h_3$, and that they also threshold on the first coordinate:
    \[h_1 = f_{1,t_1} \qquad h_2 = f_{1, t_2}\qquad h_3 = f_{1,t_3},\]
    where $t_1 < t_2 < t_3$. By assumption, we have that $h_2$ disagrees with both $h_1$ and $h_3$ on $x_2$, since:
    \[h_2(x_2) \ne h_0(x_2) = h_1(x_2) = h_3(x_2).\]
    However, $h_1$ and $h_2$ disagree only on points where the first coordinate is between $[t_1, t_2)$, while $h_2$ and $h_3$ disagree only on points where the first coordinate is between $[t_2, t_3)$. Therefore, there is no point $x_2$ that can satisfy \cref{eqn:star-condition}. This implies the upper bound $\mathfrak{s}(\DS_d) \leq 2d$.
\end{proof}

\end{document}